\def\E{\mathbb{E}}
\def\var{\text{Var}}
\def\cov{\text{Cov}}
\def\N{\mathcal{N}}
\def\W{\mathcal{W}}
\def\IW{\mathcal{IW}}
\def\B{\mathcal{B}}
\def\T{\mathcal{T}}
\newtheorem{thm}{Theorem}
\newtheorem{lem}{Lemma}
\newtheorem{pro}{Proposition}
\begin{document}

\title{Multivariate Stochastic Volatility with Bayesian Dynamic
Linear Models}
\author{K. Triantafyllopoulos\footnote{Department of Probability and Statistics,
Hicks Building, University of Sheffield, Sheffield S3 7RH, UK,
Email: \tt{k.triantafyllopoulos@sheffield.ac.uk}}}

\date{\today}

\maketitle

\begin{abstract}

This paper develops a Bayesian procedure for estimation and
forecasting of the volatility of multivariate time series. The
foundation of this work is the matrix-variate dynamic linear model,
for the volatility of which we adopt a multiplicative stochastic
evolution, using Wishart and singular multivariate beta
distributions. A diagonal matrix of discount factors is employed in
order to discount the variances element by element and therefore
allowing a flexible and pragmatic variance modelling approach.
Diagnostic tests and sequential model monitoring are discussed in
some detail. The proposed estimation theory is applied to a
four-dimensional time series, comprising spot prices of aluminium,
copper, lead and zinc of the London metal exchange. The empirical
findings suggest that the proposed Bayesian procedure can be
effectively applied to financial data, overcoming many of the
disadvantages of existing volatility models.

\textit{Some key words:} Time series, volatility, multivariate,
dynamic linear model, Bayesian, forecasting, state space, Kalman
filter, GARCH, London metal exchange.

\end{abstract}

\section{Introduction}\label{intro}

In the last two decades, multivariate time series have received
considerable attention with the emphasis being placed on state space
models (L\"utkepohl, 1993, West and Harrison, 1997, Chapter 16;
Durbin and Koopman, 2001, Chapter 3; De Gooijer and Hyndman, 2006).
From an econometrics standpoint time-varying volatility models have
been widely developed, recognizing the essence that the volatility
and the correlation of assets change over time. Although univariate
volatility models are useful in estimating and forecasting
volatility, it is widely recognized (Bauwens {\it et al.}, 2006)
that multivariate models, which can model the serial and cross
correlation of the assets, should be employed.

From a time series standpoint, volatility models are developed
within two main families of models: the multivariate generalized
autoregressive conditional heteroskedastic (MGARCH), including the
multivariate ARCH, and the multivariate stochastic volatility (MSV)
families. Multivariate ARCH models include the diagonal vech model
(Bollerslev {\it et al.}, 1988), the constant conditional
correlation model (Bollerslev, 1990), the factor-ARCH model (Engle
{\it et al.}, 1990), the BEKK model (Engle and Kroner, 1995) and the
latent factor ARCH model (Diebold and Nerlove, 1989); see also Wong
and Li (1997), Tse and Tsui (2002), Comte and Lieberman (2003), and
Audrino and Barone-Adesi (2006). MSV models have also received a lot
of attention, see e.g. Harvey {\it et al.} (1994), Jacquier {\it et
al.} (1995), Kim {\it et al.} (1998), Pitt and Shephard (1999),
Aguiliar and West (2000) and Meyer {\it et al.} (2003). A number of
estimation procedures have been suggested for MSV models; for
instance, see Bauwens {\it et al.} (2006), Yu and Meyer (2006),
Liesenfeld and Richard (2006), Asai {\it et al.} (2006) and Maasoumi
and McAleer (2006). In this context, several variations of
computationally expensive Markov chain Monte Carlo (MCMC) methods
are commonly used following papers by Shephard (1993), Jacquier {\it
et al.} (1994), Kim {\it et al.} (1998), Shephard and Pitt (1997),
Uhlig (1997), Chib {\it et al.} (2002) and Philipov and Glickman
(2006a, 2006b).

Most of the proposed models are aimed at specific applications, or
they impose restrictions in the parameter space, or they are only
available for data with low dimensionality. In particular, it would
be desirable to obtain estimation algorithms, for which the model
would estimate not only the volatility covariance matrix, but also
shocks in the levels of the returns. In addition to that, it is
desirable to construct a model that will not rely on Monte Carlo or
any other simulation procedures and also will not target data of
specific applications.

In this paper we develop a general state space model, which allows
the volatility covariance matrix to be estimated with a fast
Bayesian algorithm. The proposed algorithm is achieved by
considering a stochastic multiplicative model for the volatility,
which is based on Wishart and singular multivariate beta
distributions. A diagonal matrix of degrees of freedom is used in a
variance discounting approach in order to update the estimates and
the forecasts of the volatility from time $t-1$ to time $t$. This
has a unique advantage that different volatilities can be discounted
at different rates, for example one can have two assets, the
volatility of the first changes at a rate according to a discount
factor of $0.7$ and the volatility of the second changes at a slower
rate according to a discount factor of $0.95$. The algorithm is fast
and provides not only one-step ahead forecasts of the volatility,
but also the entire one-step ahead forecast distribution of the
volatility. A Bayesian algorithm is outlined for sequential model
comparison. The proposed methodology is illustrated by considering
data, consisting of spot prices of aluminium, copper, lead and zinc
from the London metal exchange. It is found that the volatilities of
aluminium and zinc prices are driven from a common factor and the
volatilities of copper and lead prices are driven from another
factor, while the respective correlations are around $\pm 0.5$. The
performance of the model is discussed by using several diagnostic
toolkits, including the mean of squared standardized forecast
errors, the log-likelihood function and Value-at-Risk.

The paper is organized as follows. Section \ref{model:dwr} defines
the model, for which inference is developed in Section
\ref{estimation}. Section \ref{test} discusses diagnostic tests and
model comparison, and the following section analyzes data from the
London metal exchange market. In Section \ref{conclusions} we
discuss the advantages of the proposed approach as compared with
existing GARCH estimation procedures. The appendix gives full
mathematical details (including the proofs) of arguments in Sections
\ref{estimation} and \ref{test}.

\section{Matrix-Variate Dynamic Linear Models}\label{model:dwr}

Matrix-variate dynamic linear models (MV-DLMs) are introduced in
Quintana and West (1987) and they are further developed in Salvador
{\it et al.} (2003), Salvador and Gargallo (2004), Salvador {\it et
al.} (2004), Triantafyllopoulos and Pikoulas (2002) and
Triantafyllopoulos (2006a); matrix-variate DLMs are reported in some
detail in West and Harrison (1997, \S16.4). For the purpose of this
paper the discussion is restricted to vector-valued time series; the
general description for matrix-valued time series can be found in
Salvador {\it et al.} (2003). We should note that from a frequentist
standpoint, MV-DLMs have been developed in Harvey (1986, 1989),
Harvey and Snyder (1990), and Fern\'{a}ndez and Harvey (1990).
Suppose that the $p$-dimensional response vector $y_t$ follows a
matrix-variate DLM so that
\begin{equation}\label{model2}
y_t'=F_t'\Theta_t+\epsilon_t' \quad \textrm{and} \quad \Theta_t =
G_t \Theta_{t-1}+\omega_t,
\end{equation}
where $F_t$ is a $d$-dimensional design vector, $G_t$ is a $d\times
d$ evolution matrix and $\Theta_t$ is a $d\times p$ state matrix.
Conditional on $\Sigma_t$, the innovations $\epsilon_t$ and
$\omega_t$ follow, respectively, multivariate and matrix-variate
normal distributions, i.e.
$$
\epsilon_t|\Sigma_t\sim \N_{p\times 1}(0,\Sigma_t)\quad \textrm{and}
\quad \omega_t|\Sigma_t\sim \N_{d\times p}(0,\Omega_t,\Sigma_t),
$$
where $\Sigma_t$ is the unknown $p\times p$ volatility covariance
matrix of the innovations $\epsilon_t$, and $\Omega_t$ is a $d\times
d$ covariance matrix of the innovation $\omega_t$. The distribution
of $\omega_t|\Sigma_t$ may also be written as
\begin{displaymath}
\textrm{vec}(\omega_t)|\Sigma_t\sim \N_{dp\times
1}(0,\Sigma_t\otimes \Omega_t),
\end{displaymath}
where $\textrm{vec}(\cdot)$ denotes the column stacking operator of
a matrix and $\otimes$ denotes the Kronecker product. It is assumed
that the innovation series $\{\epsilon_t\}$ and $\{\omega_t\}$ are
internally and mutually uncorrelated and also they are uncorrelated
with the assumed priors
\begin{equation}\label{dwr3}
\Theta_0|\Sigma_0\sim \N_{d\times
p}(m_0,P_0,\Sigma_0)\quad\textrm{and}\quad \Sigma_0 \sim
\IW_p(n_0+2p,S_0),
\end{equation}
for some known $m_0$, $P_0$, $n_0$ and $S_0$. Here
$\Sigma\sim\IW_p(k,S)$ denotes the inverted Wishart distribution
with $k$ degrees of freedom and parameter matrix $S$ with density
function
$$
p(\Sigma) = \frac{2^{-(k-p-1)p/2}
|S|^{(k-p-1)/2}}{\Gamma_p\{(k-p-1)/2\} |\Sigma|^{k/2}}
\textrm{etr}\left(-\frac{1}{2}S\Sigma^{-1}\right), \quad k>2p,
$$
where $\Gamma_p(\cdot)$ denotes the multivariate gamma function,
$\textrm{etr}(\cdot)$ denotes the exponent of a trace of a matrix,
and $|S|$ denotes the determinant of $S$. Then $\Sigma^{-1}$ follows
the Wishart distribution $\W_p(k-p-1,S^{-1})$. Let $N$ be a positive
integer and write $y^t=\{y_1,y_2,\ldots,y_t\}$ the information set
comprising observations up to time $t$, for $t=1,2,\ldots,N$. The
covariance matrix $\Omega_t$ is specified with at most $d$ discount
factors $\delta_1,\delta_2,\ldots,\delta_d$ so that
$$
\Sigma_{t-1}\otimes\Omega_t=\var\left\{\textrm{vec}\left(\Delta^{1/2}G_t
\Theta_{t-1}|\Sigma_{t-1},y^{t-1}\right) \right\},
$$
where
$\Delta=\textrm{diag}\{(1-\delta_1)/\delta_1,\ldots,(1-\delta_d)/\delta_d\}$.
Thus $\Omega_t$ is the implied covariance matrix obtained after
discounting is used in order to increase the covariance matrix from
time $t-1$ to time $t$, given information $y^{t-1}$. The above
equation justifies that
\begin{eqnarray*}
\Sigma_{t-1}\otimes\Omega_t&=&\var\{(I_p\otimes
\Delta^{1/2}G_t)\textrm{vec}(\Theta_{t-1})|\Sigma_{t-1}\} =
(I_p\otimes \Delta^{1/2}G_t)(\Sigma_{t-1}\otimes P_{t-1})(I_p\otimes
G_t'\Delta^{1/2}) \\ &=& \Sigma_{t-1} \otimes
\Delta^{1/2}G_tP_{t-1}G_t'\Delta^{1/2},
\end{eqnarray*}
implying $\Omega_t=\Delta^{1/2}G_tP_{t-1}G_t'\Delta^{1/2}$, where
$P_{t-1}$ is the left covariance matrix of $\Theta_{t-1}|y^{t-1}$,
so that $\Sigma_{t-1}\otimes
P_{t-1}=\var\{\textrm{vec}(\Theta_{t-1})|\Sigma_{t-1},y^{t-1}\}$ (in
Section \ref{estimation} it is shown that $P_{t-1}$ is calculated
routinely). It is proposed that the above setting for $\Omega_t$ is
carried out for the covariance matrix
$\var\{\textrm{vec}(\omega_t)|\Sigma_t\}=\Sigma_t\otimes \Omega_t$.
This setting, which generalizes the single discounting approach of
West and Harrison (1997), is necessary to consider in order to
retain conjugate forms in the updating of the posterior distribution
of $\Theta_t|\Sigma_t,y^t$ (see Section \ref{estimation}). Multiple
discount factors are useful in capturing the different structural
characteristics of trend, seasonal and regression coefficient
elements of the evolution matrix $\Omega_t$.

The volatility matrix $\Sigma_t$ imposes complications in inference,
but, it is a very useful consideration in the model because in
financial time series, high frequency data exhibit short-term or
long-term heteroscedastic behaviour. In the remainder of this
section we describe the stochastic model governing the evolution of
$\Sigma_t$.

At time $t-1$ we assume that $\Sigma_{t-1}$, conditional on
$y^{t-1}$, follows an inverted Wishart distribution, $\Sigma_{t-1} |
y^{t-1}\sim \IW_p(n+2p, S_{t-1})$, for some $n$ and $S_{t-1}$. The
precision matrix is indicated by $\Phi_t = \Sigma_t^{-1}$ and,
following a Choleski decomposition, we write $\Phi_t = C_t'C_t$,
where $C_t$ is the unique upper triangular matrix of the Choleski
decomposition. The law governing the evolution of the $\Sigma_t$ or
$\Phi_t$ from time $t-1$ to time $t$ is represented by
\begin{equation}\label{vol:evol}
\Phi_t = \beta^{-1/2} C_{t-1}' B_t C_{t-1} \beta^{-1/2},
\end{equation}
where $\beta=\textrm{diag}(\beta_1,\beta_2,\ldots,\beta_p)$ is
diagonal matrix of discount factors $\beta_1,\beta_2,\ldots,\beta_p$
and $B_t$, which given $y^{t-1}$, is assumed to be independent of
$\Phi_{t-1}$, is a random matrix following the singular multivariate
beta distribution with parameters $(p^{-1}\textrm{tr}(\beta)
n+p-1)/2$ and $1/2$; we write $B_t | y^{t-1}\sim
\B_p\{(p^{-1}\textrm{tr}(\beta) n+p-1)/2,1/2\}$. In Section
\ref{estimation} we will see that
$n=1/(1-p^{-1}\textrm{tr}(\beta))$. Of course $n$ is defined for
$0<\beta_i<1$, for $i=1,\ldots,p$. For more details on the singular
multivariate beta distribution the reader is referred to Uhlig
(1994), D\'{i}az-Garc\'{i}a and Guti\'{e}rrez (1997), and Srivastava
(2003).

The evolution (\ref{vol:evol}) is motivated from the univariate case
($p=1$), for which (\ref{vol:evol}) reduces to
\begin{equation}\label{evoluni}
\Phi_t=\beta^{-1}\Phi_{t-1}B_t.
\end{equation}
In this case the multivariate singular beta reduces to a standard
beta distribution and as $B_t$ is independent of $\Phi_{t-1}$, we
have $\E(\Phi_t|y^{t-1})=\E(\Phi_{t-1}|y^{t-1})$ and
$\var(\Phi_t|y^{t-1})> \var(\Phi_{t-1}|y^{t-1})$, since $0<\beta<
1$. This defines a random walk type evolution for $\Phi_t$. The
above evolution for a scalar volatility $\Sigma_t$ is studied in
Harrison and West (1987), West and Harrison (1997, \S10.8), and
Triantafyllopoulos (2007).

Returning to the case when $p\geq 1$, suppose that
$\beta_1=\cdots=\beta_p$ and so $\beta=\beta_1I_p$. In this case the
evolution (\ref{vol:evol}) reduces to
$$
\Phi_t=\beta_1^{-1}C_{t-1}'B_tC_{t-1},
$$
where $0<\beta_1<1$. In Proposition \ref{pro1} of the appendix, it
is shown that $\Sigma_t|y^{t-1}\sim
\IW_p(p^{-1}\textrm{tr}(\beta)n+2p, \beta^{1/2} S_{t-1}
\beta^{1/2})$ or $\Phi_t|y^{t-1}\sim\W_p(\beta_1n+p-1,
\beta_1^{-1}S_{t-1}^{-1})$ and so
\begin{equation}\label{evolsmall}
\E(\Phi_t|y^{t-1})=(\beta_1
n+p-1)\beta_1^{-1}S_{t-1}^{-1}=\left(n+\frac{p-1}{\beta_1}\right)S_{t-1}^{-1},
\end{equation}
which is different than
$\E(\Phi_{t-1}|y^{t-1})=(n+p-1)S_{t-1}^{-1}$, unless $\beta_1=1$.
It follows that the random walk type evolution of (\ref{evoluni})
is retained for values of $\beta_1$ close to 1, but otherwise the
evolution (\ref{vol:evol}) defines a shrinkage type evolution, for
which $\E(\Phi_t|y^{t-1})>\E(\Phi_{t-1}|y^{t-1})$. Our empirical
results of Section \ref{analysis} show that the estimator of
$\Sigma_t$, which is generated from evolution (\ref{vol:evol}),
performs well for relatively high values of the discount matrix
$\beta$. For $p=1$, West and Harrison (1997, \S10.8) suggest a
slow evolution (\ref{evoluni}), for which a discount factor close
to 1 is proposed. In particular on page 361 of the above reference
it is stated ``We note that practically suitable variance discount
factors take values near unity, typically between 0.95 and 0.99''.
This is in agreement with our proposal, in the general case of
$p\geq 1$, so that the shrinkage effect in (\ref{vol:evol}) is
small. However, our empirical results in Section \ref{analysis}
suggest that the modeller should allow for smaller values of the
discount factors in the range of 0.6 and 0.99 so that shocks in
the volatility can be estimated. The evolution (\ref{evolsmall})
makes the assumption that all elements of $\Phi_t$ are discounted
at the same rate via the single discount factor $\beta_1$.
Equation (\ref{vol:evol}) introduces a flexible evolution, where
each of the diagonal elements of $\Phi_t$ are discounted at
different rate via the $p$ discount factors
$\beta_1,\ldots,\beta_p$.

\section{Estimation} \label{estimation}

From evolution (\ref{vol:evol}) and Proposition \ref{pro1} of the
appendix, the prior density of $\Sigma_t| y^{t-1}$ is the inverted
Wishart density
\begin{equation}\label{prior:v}
\Sigma_t | y^{t-1} \sim \IW_p(p^{-1}\textrm{tr}(\beta)
n+2p,\beta^{1/2} S_{t-1}\beta^{1/2}),
\end{equation}
where $n=1/(1-p^{-1}\textrm{tr}(\beta))$.

Without loss in clarity of the presentation, we denote by $p(X)$
the probability density function of a random matrix $X$, avoiding
to explicitly write $p_X(X)$. Thus if $X$ and $Y$ denote two
different random matrices, $p(X)$ and $p(Y)$ denote respectively
the densities of $X$ and $Y$.

From model (\ref{model2}), given $\Sigma_t$ and $y^{t-1}$, the joint
distribution of $y_t'$ and $\Theta_t$ is
\begin{equation}\label{eq:joint}
\left[\left.\begin{array}{c} y_t' \\ \Theta_{t} \end{array} \right|
\Sigma_t, y^{t-1}\right] \sim \N_{(d+1)\times p}\left( \left[
\begin{array}{c} F_t'G_tm_{t-1} \\ G_tm_{t-1} \end{array}\right], \left[
\begin{array}{cc} F_t'R_{t}F_t & F_t'R_{t} \\ R_{t} F_t & R_{t}\end{array}\right],
\Sigma_{t} \right),
\end{equation}
where $R_{t}=G_tP_{t-1}G_t'+\Omega_t$ and the covariance of $y_{t}'$
and $\Theta_{t}$ is determined by
\begin{eqnarray*}
\cov\{\textrm{vec}(y_t'),\textrm{vec}(\Theta_t) |
\Sigma_t,y^{t-1}\}&=&
\cov\{\textrm{vec}(F_t'\Theta_{t}+\epsilon_{t}'),\textrm{vec}(\Theta_{t})|\Sigma_{t},y^{t-1}\}
\\ &=& \cov\{(1\otimes
F_t')\textrm{vec}(\Theta_{t}),\textrm{vec}(\Theta_{t})|\Sigma_{t},y^{t-1}\}
\\ &=& (1\otimes F_t')
\var\{\textrm{vec}(\Theta_{t})|\Sigma_{t},y^{t-1}\} \\
&=&(1\otimes F_t')(\Sigma_{t}\otimes R_{t}) = \Sigma_{t} \otimes
(F_t' R_t).
\end{eqnarray*}

From (\ref{eq:joint}) and the inverted Wishart prior (\ref{prior:v})
it follows that the joint forecast density of $y_{t}'$ and
$\Theta_{t}$, given only $y^{t-1}$ is a $p\times 1$ multivariate
Student $t$ density (see Theorem 4.2.1 of Gupta and Nagar, 1999,
\S4.2), i.e.
\begin{eqnarray*}
\left[\left.\begin{array}{c} y_t' \\ \Theta_{t} \end{array} \right|
y^{t-1}\right] & \sim & \T_{p\times 1}\bigg(p^{-1}\textrm{tr}(\beta)
n+p-1, \left[
\begin{array}{c} F_t'G_tm_{t-1} \\ G_tm_{t-1} \end{array}\right], \left[
\begin{array}{cc} F_t'R_{t}F_t & F_t'R_{t} \\ R_{t} F_t &
R_{t}\end{array}\right], \\ &&  \beta^{1/2} S_{t-1} \beta^{1/2}
\bigg)  \equiv  \T_{p\times 1}( \nu, M, U, S),
\end{eqnarray*}
with density
\begin{eqnarray*}
p(T|y^{t-1})&=&
\frac{\Gamma_p\{(\nu+d+p-1)/2\}}{\pi^{dp/2}\Gamma_p\{(\nu+p-1)/2\}}
|U|^{-p/2} |S|^{(\nu+p-1)/2} \\ &&\times
|S+(T-M)'U^{-1}(T-M)|^{-(\nu+d+p-1)/2},
\end{eqnarray*}
where $T'=[y_{t} ~ \Theta_{t}']$.

Applying Bayes' theorem, the posterior distribution of $\Sigma_{t} |
y^{t}$ results to be an inverted Wishart. To detail the derivations
of this result we need to note that the likelihood function of
$\Sigma_{t}$ from the single observation $y_{t}$ is
$L(\Sigma_{t};y_{t})=p(y_{t}'|\Sigma_{t},y^{t-1})$, whilst the prior
of $\Sigma_{t}$ is given by (\ref{prior:v}). Thus the posterior of
$\Sigma_{t}$ given $y^{t}$ is
\begin{eqnarray*}
p(\Sigma_{t}| y^{t}) &=& \frac{L(\Sigma_{t};y_{t}) p(\Sigma_{t}|
y^{t-1})}{p(y_{t}'| y^{t-1})}
=\frac{p(y_t|\Sigma_{t},y^{t-1}) p(\Sigma_{t}| y^{t-1})}{p(y_{t}'| y^{t-1})} \\
&\propto& |\Sigma_{t}|^{-1/2}\textrm{etr} \left\{-\frac{1}{2}
(y_{t}'-F_t'G_tm_{t-1})Q_{t}^{-1}(y_{t}'-F_t'G_tm_{t-1})'\Sigma_{t}^{-1}\right\}
\\ && \times |\Sigma_{t}|^{-(p^{-1}\textrm{tr}(\beta) n+2p)/2}
\textrm{etr}\left(-\frac{1}{2} \beta^{1/2}
S_{t-1}\beta^{1/2}\Sigma_{t}^{-1}\right)
\\ &=& |\Sigma_{t}|^{-(p^{-1}\textrm{tr}(\beta) n+1+2p)/2} \textrm{etr} \bigg[
-\frac{1}{2} \big\{
(y_{t}'-F_t'G_tm_{t-1})Q_{t}^{-1}(y_{t}-m_{t-1}'G_t'F_t) \nonumber
\\ &&+\beta^{1/2} S_{t-1}\beta^{1/2}\big\}\Sigma_{t}^{-1}\bigg],
\end{eqnarray*}
which is proportional to the inverted Wishart distribution
$\IW_p(n^*+2p,S_t)$, with
\begin{gather}
S_t=\beta^{1/2} S_{t-1}\beta^{1/2}+e_{t}Q_{t}^{-1}e_{t}',\quad
n^*=p^{-1}\textrm{tr}(\beta) n+1,\label{eq30}
\end{gather}
where $e_{t}=y_{t}-y_{t-1}(1)=y_{t}-m_{t-1}'G_t'F_t$ is the one-step
forecast error and $Q_{t}=F_t'R_{t}F_t+1$. The recursions of $m_{t}$
and $P_{t}$ are calculated routinely, by writing down the posterior
distribution of $\Theta_{t}| \Sigma_{t},y^{t}$, i.e. $\Theta_{t}|
\Sigma_{t},y^{t}\sim \N_{p\times 1}(m_{t},P_{t},\Sigma_{t})$, where
from an application of the Kalman filter, we have
$m_{t}=G_tm_{t-1}+R_{t}F_tQ_{t}^{-1}e_{t}'$ and
$P_{t}=R_{t}-R_{t}F_tQ_{t}^{-1}F_t'R_{t}$.

The second parameter of the singular multivariate beta distribution
(see Lemma \ref{lemma} in Appendix), denoted by $q$, needs to
satisfy two requirements (a) $2q$ must be positive integer number
and (b) $p^{-1}\textrm{tr}(\beta)n+p$ must equal $n+p-1$. (a) is
needed for the singular multivariate beta distribution to be defined
(Uhlig, 1994) and (b) is needed for the distribution of the prior
Wishart of $\Phi_t|y^{t-1}$ (see Proposition \ref{pro1} in the
appendix). These two requirements result to the adoption of the
prior
$$
n=\frac{1}{1-p^{-1}\textrm{tr}(\beta)},
$$
where $\beta$ may be close, but not equal to $I_p$. With the above
prior of $n$, the degrees of freedom of equation (\ref{eq30}) become
$$
n^*=p^{-1}\textrm{tr}(\beta)n+1 =
\frac{1}{1-p^{-1}\textrm{tr}(\beta)}=n.
$$

Define $r_t=y_t-m_t'F_t$, the residual error vector. Then we have
that
\begin{displaymath}
r_t=e_t\{I_q-(R_tF_tQ_t^{-1})'F_t\}=e_tQ_t^{-1}(Q_t-F_t'R_tF_t)=
e_tQ_t^{-1}.
\end{displaymath}
From this, it follows that equation (\ref{eq30}) can be written as
$S_t=\beta^{1/2} S_{t-1}\beta^{1/2}+r_{t}e_{t}'$, or
\begin{equation}\label{eq32}
S_t= \beta^{t/2}S_0\beta^{t/2} + \sum_{i=0}^{t-1}\beta^{i/2}
r_{t-i}e_{t-i}'\beta^{i/2}.
\end{equation}
The posterior expectation of $\Sigma_t$ is
$\E(\Sigma_t|y^t)=S_t/(n-2)=(1-p^{-1}\textrm{tr}(\beta))S_t/(2p^{-1}\textrm{tr}(\beta)-1)$,
for $p^{-1}\textrm{tr}(\beta)>1/2$. From equation (\ref{prior:v}),
the one-step forecast mean of $\Sigma_t$ is
$\E(\Sigma_t|y^{t-1})=(1-p^{-1}\textrm{tr}(\beta))\beta^{1/2}
S_{t-1}\beta^{1/2}/(3p^{-1}\textrm{tr}(\beta)-2)$, for
$p^{-1}\textrm{tr}(\beta)>2/3$.

The above estimation procedure is valid for $0<\beta_i< 1$, while
from equation (\ref{vol:evol}) if $\beta_1=\beta_2=\ldots
=\beta_p=1$, then $\Phi_t=\Phi_{t-1}$ and the volatility is
unchanged from $t-1$ to $t$. Note that if
$\beta_1=\beta_2=\ldots=\beta_p$ we have $\beta=\beta_1I_p$ and in
this special case all elements of $\Sigma_t$ are discounted in the
same rate. The advantage of employing the discount matrix $\beta$ is
that different elements of the volatility estimator $\Sigma_t$ can
be discounted at different rate. For example for $p=2$, one can set
$\beta=\textrm{diag}(1,0.9)$, so that with
$\Sigma_t=(\sigma_{ij,t})_{i,j=1,2}$, the variance $\sigma_{11,t}$
has constant volatility, but the variance $\sigma_{22,t}$ is
discounted at a rate according to a discount factor of 0.9. The
situation $\beta=I_p$, is leading to a time-invariant volatility
$\Sigma_t=\Sigma$, for all $t$, and this is usually impractical. In
this case, the posterior distribution of $\Sigma$ is the inverted
Wishart $\Sigma|y^t\sim \IW_p(n_0+t+2p,S_t)$, with
$S_t=S_0+\sum_{i=1}^tr_ie_i'$, where $n_0$ are the initial degrees
of freedom. In the next result we relate the above posterior
estimate $S_t$ with the maximum likelihood estimator of $\Sigma$.
First note that conditional on $\Sigma$, the posterior distribution
of $\Theta_t$ is
\begin{equation}\label{dwr5a}
\Theta_t|\Sigma,y^t\sim \N_{d\times p}(m_t,P_t,\Sigma).
\end{equation}
Then we have the following result.
\begin{thm}\label{th0b}
In the MV-DLM (\ref{model2}) suppose that, for all $t$,
$\Sigma_t=\Sigma$, and so conditional on $\Sigma$, the posterior
distribution of $\Theta_t$ is given by equation (\ref{dwr5a}). Then
the maximum likelihood estimator of $\Sigma$, based on data
$y^N=\{y_1,y_2,\ldots,y_N\}$, is
$$
\widehat{\Sigma}_N=\frac{1}{N}\sum_{t=1}^{N}r_te_t',
$$
where $e_t=y_t-m_{t-1}'G_t'F_t$ is the one-step forecast error
vector and $r_t=y_t-m_t'F_t$ is the residual error vector.
\end{thm}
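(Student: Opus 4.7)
The plan is to exploit the separable Kronecker covariance structure of the matrix-variate DLM, which makes the Kalman filter $\Sigma$-free, and then reduce the MLE problem to the standard Gaussian covariance estimation problem.

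First I would observe that under the assumption $\Sigma_t = \Sigma$, the recursions for $m_{t-1}$, $P_{t-1}$, $R_t$ and $Q_t = F_t'R_tF_t+1$ displayed in Section \ref{estimation} involve only $F_t$, $G_t$, $\Omega_t$ and the observed $y^{t-1}$; crucially, none of them involves $\Sigma$. Consequently the one-step forecast distribution of $y_t$ given $\Sigma$ and $y^{t-1}$ is $\N_p(m_{t-1}'G_t'F_t, Q_t\Sigma)$ with $Q_t$ a known positive scalar. Using the prequential factorization $p(y^N\mid\Sigma)=\prod_{t=1}^N p(y_t\mid\Sigma,y^{t-1})$, the log-likelihood becomes, up to a $\Sigma$-free constant,
\begin{equation*}
\ell(\Sigma)=-\frac{N}{2}\log|\Sigma|-\frac{1}{2}\sum_{t=1}^N Q_t^{-1}\,e_t'\Sigma^{-1}e_t = -\frac{N}{2}\log|\Sigma|-\frac{1}{2}\textrm{tr}\bigl(\Sigma^{-1}A_N\bigr),
\end{equation*}
where $A_N=\sum_{t=1}^N Q_t^{-1} e_t e_t'$ and I have rewritten the quadratic form as a trace.

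Next I would invoke the standard multivariate normal MLE identity: the map $\Sigma\mapsto -\tfrac{N}{2}\log|\Sigma|-\tfrac{1}{2}\textrm{tr}(\Sigma^{-1}A_N)$ is maximized at $\Sigma=A_N/N$ (for $A_N$ positive definite, which holds almost surely for $N\geq p$). This gives $\widehat{\Sigma}_N=N^{-1}\sum_{t=1}^N Q_t^{-1} e_t e_t'$. To convert to the form stated in the theorem I would use the identity $r_t=Q_t^{-1}e_t$ that was established in the excerpt just before equation (\ref{eq32}); then $r_t e_t' = Q_t^{-1} e_t e_t'$ and therefore $\widehat{\Sigma}_N=N^{-1}\sum_{t=1}^N r_t e_t'$, as claimed.

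The main obstacle, and the only place where care is required, is justifying that $m_{t-1}$ and $Q_t$ may be treated as fixed constants inside the likelihood — that is, that differentiating through $m_{t-1}$ with respect to $\Sigma$ contributes nothing. This is precisely the payoff of the separable covariance form $\Sigma\otimes P_{t-1}$ in the DLM: the Kalman recursions $m_t=G_t m_{t-1}+R_tF_tQ_t^{-1}e_t'$ and $P_t=R_t-R_tF_tQ_t^{-1}F_t'R_t$ factor $\Sigma$ out entirely, so $m_{t-1}$ is a deterministic function of $y^{t-1}$ alone. The remainder of the argument is then the routine Gaussian MLE calculation combined with the algebraic simplification $r_t=Q_t^{-1}e_t$.
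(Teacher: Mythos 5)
Your proposal is correct and follows essentially the same route as the paper: the identical prequential factorization gives the log-likelihood $-\tfrac{N}{2}\log|\Sigma|-\tfrac{1}{2}\sum_{t=1}^N Q_t^{-1}e_t'\Sigma^{-1}e_t$ (equation (\ref{app:loglb}) in the appendix), the key observation that $m_{t-1}$, $R_t$ and $Q_t$ are $\Sigma$-free is the same, and the identity $r_t=Q_t^{-1}e_t$ converts the maximizer to the stated form exactly as the paper does. The only difference is in the final optimization step: you invoke the standard fact that $\Sigma\mapsto-\tfrac{N}{2}\log|\Sigma|-\tfrac{1}{2}\textrm{tr}(\Sigma^{-1}A_N)$ is maximized at $A_N/N$, whereas the paper verifies this by explicit first- and second-order matrix differentiation with duplication matrices and a negative-definiteness check of the Hessian; your shortcut is legitimate and cleaner.
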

For $n_0=0$ and $S_0=0$, the estimator of $\Sigma$, which results
from the above inverted Wishart prior is
$S_N=N^{-1}\sum_{t=1}^Nr_te_t'=\widehat{\Sigma}_N$ and so the
posterior estimator of $\Sigma$ equals to the maximum likelihood
estimator of $\Sigma$. However, when $\Sigma_t$ is a time-dependent
volatility matrix, a similar procedure for the maximum likelihood
estimator of $\Sigma_t$ is not available in closed form and so the
above sequential Bayesian estimation procedure is thought to be
advantageous and preferable as opposed to approximate likelihood
estimation procedures (Durbin and Koopman, 2001). The log-likelihood
function when $\Sigma_t$ is time-dependent is given in Theorem
\ref{th2} of the next section.

\section{Model Diagnostics and Model Comparison}\label{test}

From equation (\ref{prior:v}) we have that the one-step forecast
mean of $\Sigma_t$ is
$\E(\Sigma_t|y^{t-1})=(1-p^{-1}\textrm{tr}(\beta))\beta^{1/2}
S_{t-1} \beta^{1/2} / (3p^{-1}\textrm{tr}(\beta)-2)$, where
$p^{-1}\textrm{tr}(\beta)>2/3$. The one-step forecast error
distribution is a $p$-variate $t$ distribution, i.e.
$$
e_t|y^{t-1}\sim \T_{p\times 1}(k, 0,
Q_t\beta^{1/2}S_{t-1}\beta^{1/2})
$$
where $k=p^{-1}\textrm{tr}(\beta)/(1-p^{-1}\textrm{tr}(\beta))$.
Note that the condition $p^{-1}\textrm{tr}(\beta)>2/3$ ensures that
$k>2$, hence, given $y^{t-1}$, the covariance matrix of $e_t$
exists. By defining
$$
u_t=(Q_t^*)^{1/2}e_t=\{(k-2)Q_t^{-1}\beta^{-1/2}S_{t-1}^{-1}\beta^{-1/2}\}^{1/2}e_t,
$$
the one-step standardized forecast errors, we obtain
$$
u_t|y^{t-1}\sim \T_{p\times 1}\{k, 0, (k-2)I_p\},
$$
where $(Q_t^*)^{1/2}$ denotes the square root of $Q_t^*$, based on
the Choleski decomposition, or based on the spectral decomposition.
From this it follows that $\E(u_t|y^{t-1})=0$ and
$\var(u_t|y^{t-1})=\E(u_tu_t'|y^{t-1})=I_p$ and so, by writing
$u_t=[u_{1t}~u_{2t}~\cdots~u_{pt}]'$, one measure of goodness of fit
is the mean of squared standardized one-step forecast errors (MSSE),
defined by
$$
\textrm{MSSE}=\frac{1}{N}\sum_{t=1}^N \left[ u_{1t}^2 ~ u_{2t}^2 ~
\cdots ~ u_{pt}^2 \right]',
$$
which should be close to $[1~1~\cdots~1]'$, if the model produces a
good fit to the data. Of course when $\beta=I_p$, the above $t$
distributions can not be defined, since $\textrm{tr}(\beta)=p$. In
this case we have $e_t|y^{t-1}\sim\T_{p\times
1}(n_0+t-1,0,Q_tS_{t-1})$ and then, with $k_t=n_0+t-1$, we get
$u_t|y^{t-1}\sim\T_{p\times 1}\{k_t, 0, (k_t-2)I_p\}$ and hence all
other definitions remain unchanged. Other measures of goodness of
fit are the mean absolute one-step forecast errors (MAE) and mean
error (ME), defined, respectively, by
$$
\textrm{MAE}=\frac{1}{N}\sum_{t=1}^N\left[\textrm{mod}(e_{1t}) ~
\textrm{mod}(e_{2t}) ~ \cdots ~ \textrm{mod}(e_{pt})\right]'\quad
\textrm{and} \quad \textrm{ME}=\frac{1}{N}\sum_{t=1}^Ne_t,
$$
where $e_t=[e_{1t}~e_{2t}~\cdots~e_{pt}]'$ and
$\textrm{mod}(e_{it})$ denotes the modulus of $e_{it}$, for
$i=1,2,\ldots,p$.

Another method of model diagnostics and model comparison is based on
the Value-at-Risk (VaR), which in laid words is the amount of money
of an asset that one expects to lose with some probability over a
certain time horizon. There are several ways of calculating the VaR
of a portfolio, but here we mention only the most popular, which is
termed as the variance-covariance approach and it is due to Morgan
(1996). The VaR of a portfolio has a single value (under a specific
model), which according to Brooks and Persand (2003) is
$$
\textrm{VaR}(N,\alpha)=\mu_N+F_N^{-1}(1-\alpha/100)\sigma_N,
$$
where $\textrm{VaR}(N,\alpha)$ is the VaR of a portfolio at time $N$
and percentage significance level $\alpha$, $F_N(\cdot)$ is the
distribution function of the standardized portfolio returns
$(z_N-\mu_N)/\sigma_N$, and $\sigma_N^2$ is the conditional
volatility of $z_N$. For known weights $w_1,\ldots,w_p$ satisfying
$w_i\geq 0$ and $\sum_{i=1}^pw_i=1$, we define the portfolio returns
$z_t=\sum_{i=1}^pw_ix_{i,t}$ and so its volatility is
$\sigma_t^2=\sum_{i=1}^pw_i^2\sigma_{ii,t}+2\sum_{i<j}w_iw_j\sigma_{ij,t}$,
where $\Sigma_t=(\sigma_{ij})_{i,j=1,2,\ldots,p}$. For their
internal evaluation of market risk, investment banks typically use
$95\%$ significance levels, leading to less tight evaluation of VaR,
i.e. the resulting from VaR amount of money will cover $95\%$ of
probable loses. The Basel Committee on Banking Supervision (1996,
1998) uses a tight $99\%$ confidence percentage to ensure coverage
of $99\%$ losses. Clearly
$\textrm{VaR}(N,0.95)<\textrm{VaR}(N,0.99)$, since there is needed
more money to cover larger proportion of probable loses. More
details on VaR and its evaluation may be found in Tsay (2002,
Chapter 7) and Chong (2004).

Another measure of goodness of fit, is based on the evaluation of
the log-likelihood function, as a means of model design (e.g.
choosing values of the discount matrices $\Delta$ and $\beta$) and
model comparison. The next result gives an expression of the
log-likelihood function.

\begin{thm}\label{th2}
In the MV-DLM (\ref{model2}) denote with
$\ell(\Sigma_1,\Sigma_2,\ldots,\Sigma_N;y^N)$ the log-likelihood
function of $\Sigma_1,\Sigma_2,\ldots,\Sigma_N$, based on data
$y^N=\{y_1,y_2,\ldots,y_N\}$. Then it is
\begin{eqnarray*}
\ell(\Sigma_1,\Sigma_2,\ldots,\Sigma_N;y^N)  &=& c -
\frac{1}{2}\sum_{t=1}^N \bigg[ p\log Q_t + (p-m) \log |\Sigma_{t-1}|
+ e_t'Q_t^{-1}\Sigma_t^{-1}e_t \\ && + p\log |L_t| + (m-p-2) \log
|\Sigma_t| \bigg]
\end{eqnarray*}
and
\begin{eqnarray*}
c&=&\frac{N(m-p)}{2}\sum_{i=1}^p\log\beta_i +
N\log\Gamma_p\{(m+1)/2\}-\frac{pN}{2}\log 2-pN\log\pi - N\log
\Gamma_p(m/2),
\end{eqnarray*}
where $\beta=\textrm{diag}(\beta_1,\beta_2,\ldots,\beta_p)$,
$m=p^{-1}\textrm{tr}(\beta)/(1-p^{-1}\textrm{tr}(\beta))+p-1$ and
$L_t$ is the diagonal matrix with diagonal elements the positive
eigenvalues of
$I_p-(C_{t-1}')^{-1}\beta^{1/2}\Sigma_t^{-1}\beta^{1/2}C_{t-1}^{-1}$,
with $\Sigma_t^{-1}=C_t'C_t$.
\end{thm}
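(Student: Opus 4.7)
The plan is to expand the log-likelihood via a Markov decomposition of the joint density of $(y^N, \Sigma_1, \ldots, \Sigma_N)$ and to compute each piece separately. From (\ref{model2}) and (\ref{vol:evol}) one checks that the filter quantities $m_{t-1}$, $P_{t-1}$, $R_t$, $Q_t$, $e_t$ are all measurable with respect to $\sigma(y^{t-1})$, and that $C_{t-1}$ is a deterministic function of $\Sigma_{t-1}$ alone. Hence the conditional density of $y_t$ given $(\Sigma_t, \Sigma_1, \ldots, \Sigma_{t-1}, y^{t-1})$ reduces to that of $y_t \mid \Sigma_t, y^{t-1}$, and the conditional density of $\Sigma_t$ given $(\Sigma_1, \ldots, \Sigma_{t-1}, y^{t-1})$ reduces to that of $\Sigma_t \mid \Sigma_{t-1}$, giving the factorisation
\begin{equation*}
\ell(\Sigma_1,\ldots,\Sigma_N;y^N) = \sum_{t=1}^N \bigl[\log p(y_t \mid \Sigma_t, y^{t-1}) + \log p(\Sigma_t \mid \Sigma_{t-1})\bigr],
\end{equation*}
up to a constant depending only on the $\Sigma_0$ prior that is absorbed into $c$.

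The observation term is immediate from Section \ref{estimation}. Marginalising $\Theta_t$ out of (\ref{eq:joint}) gives $y_t' \mid \Sigma_t, y^{t-1} \sim \N_{1 \times p}(F_t' G_t m_{t-1}, Q_t, \Sigma_t)$, and its log-density is $-\tfrac{p}{2}\log(2\pi) - \tfrac{p}{2}\log Q_t - \tfrac{1}{2}\log|\Sigma_t| - \tfrac{1}{2} e_t' Q_t^{-1}\Sigma_t^{-1} e_t$. Summed over $t$ this supplies the $p\log Q_t$ and $e_t' Q_t^{-1}\Sigma_t^{-1} e_t$ pieces of the theorem, plus one $-\tfrac{1}{2}\log|\Sigma_t|$ per $t$ that I would fold into the $(m-p-2)\log|\Sigma_t|/2$ coefficient coming from the transition term, and a $-\tfrac{Np}{2}\log(2\pi)$ contribution to $c$.

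For the transition term I would invert (\ref{vol:evol}) to obtain $B_t = (C_{t-1}')^{-1}\beta^{1/2}\Phi_t\beta^{1/2}C_{t-1}^{-1}$, so that $I_p - B_t$ has its positive eigenvalues on the diagonal of $L_t$. The singular multivariate beta density of $B_t \sim \B_p(m/2, 1/2)$ (Uhlig, 1994) is proportional, on its support, to $|B_t|^{(m-p-1)/2}|L_t|^{-p/2}$, with a normalising constant involving $\Gamma_p((m+1)/2)/\Gamma_p(m/2)$ and explicit powers of $2$ and $\pi$. Performing the change of variables first from $B_t$ to $\Phi_t$ via (\ref{vol:evol}) and then from $\Phi_t$ to $\Sigma_t$ via $\Phi_t = \Sigma_t^{-1}$, and substituting $|B_t| = |\beta|\,|\Sigma_{t-1}|/|\Sigma_t|$ (which follows from $|C_{t-1}|^2 = |\Phi_{t-1}|$), I would collect the coefficients $(p-m)/2$ of $\log|\Sigma_{t-1}|$, $(m-p-2)/2$ of $\log|\Sigma_t|$, $(m-p)/2$ of $\sum_{i=1}^p \log\beta_i$, and $-p/2$ of $\log|L_t|$, matching the summed expression in the theorem.

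The main obstacle is the Jacobian of the singular transformation. Because the second parameter $1/2$ of $\B_p(m/2, 1/2)$ is below $p/2$, the support of $B_t$ is the lower-dimensional manifold on which $I_p - B_t$ has rank one, and the standard rule $dB = |C|^{p+1} dX$ for $B = C X C'$ on the full cone of symmetric matrices does not apply verbatim; one must instead invoke the singular Jacobian developed in Uhlig (1994) and D\'{\i}az-Garc\'{\i}a and Guti\'errez (1997), and this is precisely what delivers the non-trivial exponents of $|\Sigma_{t-1}|$, $|\Sigma_t|$ and $\prod_i \beta_i$ that appear in the theorem. Once that Jacobian is in hand, everything else is routine bookkeeping: summing the per-$t$ pieces yields the displayed sum, and the $t$-independent remainders from the matrix-normal, the singular beta normaliser and the gamma ratio assemble into the constant $c$ in exactly the stated form.
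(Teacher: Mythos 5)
Your proposal follows essentially the same route as the paper's own proof: the same factorisation of the likelihood into per-step observation and transition densities, the same Gaussian observation term $y_t|\Sigma_t,y^{t-1}\sim\N_{p\times 1}(m_{t-1}'G_t'F_t,Q_t\Sigma_t)$, and the same key ingredient for the transition term, namely the singular multivariate beta density of $B_t$ combined with the singular Jacobian of Uhlig (1994) and D\'{i}az-Garc\'{i}a and Guti\'{e}rrez (1997) for the change of variables $B_t\mapsto\Sigma_t$, which the paper merely packages separately as Lemma \ref{lemma} (the density of $X=AB^{-1}A'$ with $A=\beta^{1/2}C_{t-1}^{-1}$, $n=1$). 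The remaining bookkeeping you describe assembles the constant $c$ and the per-$t$ coefficients exactly as in the paper, so the argument is essentially identical.
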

Note that if $\beta=I_p$, then $\Sigma_t=\Sigma$, for all $t$, and
the log-likelihood function of $\Sigma$ reduces to
\begin{equation}\label{loglsmall}
\ell(\Sigma;y^N)=-\frac{pN}{2}\log
(2\pi)-\frac{p}{2}\sum_{t=1}^N\log Q_t -\frac{N}{2}\log
|\Sigma|-\frac{1}{2}\sum_{t=1}^Ne_t'Q_t^{-1}\Sigma^{-1}e_t.
\end{equation}
The log-likelihood function of Theorem \ref{th2} is clearly provided
conditional on the values of $\Delta$ and $\beta$ and so, replacing
$\Sigma_t$ by $S_t/(n-2)$ (the posterior mean of $\Sigma_t$) in the
log-likelihood, one way to choose these values is by maximizing the
log-likelihood over a range of candidate values for $\Delta$ and
$\beta$.

In model comparison, the log-likelihood function is particularly
useful, as it can be used forming likelihood ratios in order to
compare and contrast the performance of two models. A similar idea
can be implemented by considering sequential model monitoring, for
which, two models are compared by using sequential Bayes' factors of
the standardized errors $u_1,u_2,\ldots,u_N$. Following the ideas of
West and Harrison (1997, Chapter 11) and Salvador and Gargallo
(2004), we consider two models $\mathcal{M}_1$ and $\mathcal{M}_2$,
which differ in some quantitative form, e.g. in the values of the
discount matrices, and by writing all densities conditional on these
two models, we form the log Bayes' factor
$$
LBF(t)=\log\left[\frac{p(u_t|y^{t-1},\mathcal{M}_1)}{p(u_t|y^{t-1},\mathcal{M}_2)}\right],
\quad t=1,2,\ldots,N.
$$
Then, at time $t$, $\mathcal{M}_1$ is in favour of $\mathcal{M}_2$
(equiv. $\mathcal{M}_2$ is in favour of $\mathcal{M}_1$), if
$LBF(t)>0$ (equiv. $LBF(t)<0$), while when $LBF(t)=0$, the two
models are equivalent, in the sense that they produce similar
forecasts and similar standardized forecast errors. Some algorithms
have been proposed in the literature about how the above test can be
done efficiently. Some work includes Monte Carlo simulation
(Salvador {\it et al.}, 2004), some work is restricted in the case
of a time-invariant volatility matrix (Salvador and Gargallo, 2004)
and most of the work refers to univariate processes (West and
Harrison, 1997, Salvador and Gargallo, 2004, 2005, 2006).
Triantafyllopoulos (2006b) proposes a general procedure, according
to which, a modified exponentially weighted moving average control
chart is applied to the univariate process
$\{LBF(t)\}_{t=1,2,\ldots,N}$ and control signals indicate model
preference.

The above ideas of model comparison, based on Bayes' factors, can
also be applied to the problem of sequential monitoring of a single
model. This approach, which is explored in detail in West and
Harrison (1997, Chapter 11) and in Salvador and Gargallo (2004,
2005, 2006), proposes the adoption of a set of alternative models,
compares the current model with these and makes a sequential
decision adopting the best model, according to the behaviour of the
Bayes' factor.

\section{Example: The London Metal Exchange Data}\label{data}

\subsection{Description of the Data}

The London metal exchange (LME) is the world's premier non-ferrous
metals market, with highly liquid contracts. Its trading customers
may be metal industries or individuals (sellers or buyers). The
metals currently traded in the exchange are: aluminium, copper grade
A, standard lead, primary nickel, tin, and zinc. More details about
the LME can be found on its web site: {\tt http://www.lme.co.uk}.

The importance of the LME and its operations has recently invited
considerable interest. Here, from a statistical point of view, we
mention the work of McKenzie {\it et al.} (2001) and the review of
Watkins and McAleer (2004). Triantafyllopoulos (2006a) gives a brief
account to the statistical work on the LME.

\begin{figure}[t]
 \epsfig{file=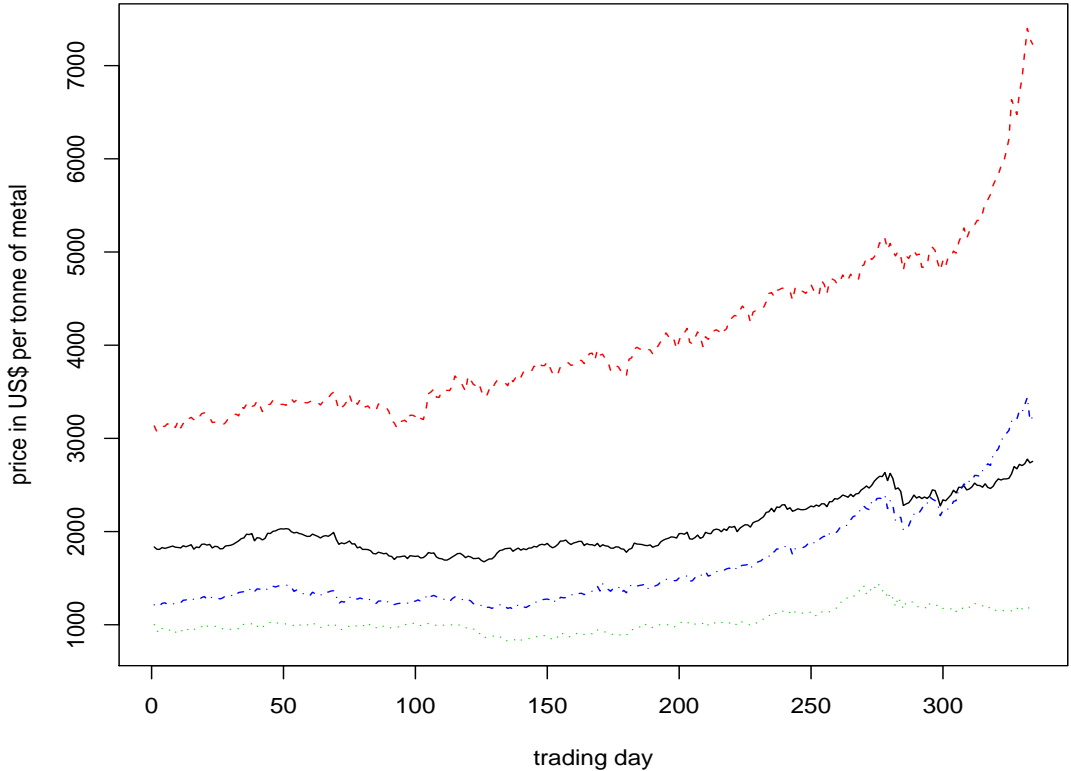, height=12cm, width=15cm}
 \caption{LME data, consisting of aluminium (solid line), copper (dashed line),
 lead (dotted line) and zinc (dashed-dotted line) spot prices (in US dollars
 per tonne of each metal).}\label{fig1}
\end{figure}

In this paper we concentrate on spot prices of four metals exchanged
in the LME, namely aluminium, copper, lead and zinc. We have 4
variables of interest collected in the observation vector
$y_t=[y_{1t}~y_{2t}~y_{3t}~y_{4t}]'$. Each variable comprises the
spot price per tonne of metal: $y_{1t}$ is the spot variable, which
indicates the daily/current ask price per tonne of aluminium; the
remaining three variables are the relevant spot ask prices of
copper, lead and zinc, respectively. The data are collected for
every trading day from 4 January 2005 to 28 April 2006, and are
plotted in Figure \ref{fig1}. After excluding week-ends and bank
holidays, there are $N=334$ trading days. The data have been
obtained from the LME web site: {\tt http://www.lme.co.uk}.

\subsection{Statistical Analysis}\label{analysis}

Here we consider the compound return time series
$\{x_t\}_{t=1,2,\ldots,333}$ with $x_{t-1}=\log y_t-\log y_{t-1}$,
for $t=2,3,\ldots,334$. Most of the current literature in
econometrics is focused on modelling only the volatility of the
series, but for the MV-DLMs considered in this paper, one can model
with the same model the returns (for forecasting purposes) and
estimate the volatility matrix.

We use the model
\begin{equation}\label{example}
x_t=\mu_t+\epsilon_t, \quad \mu_t=\Theta_t'F, \quad
\Theta_t=\Theta_{t-1}+\omega_t,
\end{equation}
where $\epsilon_t\sim N_{4\times 1}(0,\Sigma_t)$, $\omega_t\sim
N_{2\times 4}(0, \Omega_t, \Sigma_t)$ and $\mu_t$ is the level of
the series at time $t$. The design vector $F=[1~0]'$ is invariant of
time and a random walk evolution for the states $\Theta_t$ has been
chosen, which is suitable for modelling the compound returns (Tsay,
2002, Cuaresma and Hlouskova, 2005). The volatility of the series is
measured with the volatility matrix $\Sigma_t$, which is subject to
estimation. There might be some uncertainty on the dimension $d$ of
the rows of $\Theta_t$, but here for parsimonious modelling we
choose a low value for $d$. It might be worthwhile to consider $d$
as random, but this can add computational delays to the estimation
process. The $2\times 2$ evolution covariance matrix $\Omega_t$ can
be specified with two discount factors $\delta_1$ and $\delta_2$
according to the discussion in Section \ref{model:dwr}. However, it
can be seen that since $F$ is time invariant model (\ref{example})
can be decomposed as
$$
x_t=\mu_t+\epsilon_t,\quad \mu_t=\mu_{t-1}+\zeta_t, \quad
\zeta_t\sim N_{p\times 1}(0,F'\Omega_tF\Sigma_t),
$$
which is a random walk plus noise model. Since
$F'\Omega_tF=(1-\delta_1)p_{11,t-1}/\delta_1$, where
$P_t=(p_{ij,t})_{i,j=1,2,3,4}$, it can be seen that only $\delta_1$
has a contribution to the model and in particular model
(\ref{example}) is equivalent to a model with a single discount
factor, i.e. $\Omega_t=(1-\delta)P_{t-1}/\delta$ and
$\delta=\delta_1$. So there are five discount factors of interest:
$\delta$, which is the discount factor responsible for the random
walk evolution of the level $\mu_t$, and
$\beta_1,\beta_2,\beta_3,\beta_4$, which are responsible for the
evolution of the $4\times 4$ volatility matrix $\Sigma_t$; $\beta_i$
is the discount factor for the volatility of the compound series
$x_i$, where $\beta=[\beta_1~\beta_2~\beta_3~\beta_4]'$ and
$x_t=[x_{1t}~x_{2t}~x_{3t}~x_{4t}]'$. We specify the priors
$$
\Theta_0|\Sigma_0\sim \N_{2\times 4}(0,1000I_2,\Sigma_0),\quad
\mu_0|\Sigma_0\sim \N_{4\times 1}(0,1000\Sigma_0), \quad
\Sigma_0\sim \IW_4(n+8, I_4),
$$
where $n=1/(1-4^{-1}\textrm{tr}(\beta))$.

Table \ref{table1} shows two performance measures, namely the MSSE
and the log-likelihood function (see Section \ref{test}). Two values
of $\delta$ are picked and compared with; a small value
$\delta=0.08$ (corresponding to an adapting, but not smooth
evolution for the level $\mu_t$) and a high value $\delta=0.8$
(corresponding to a smooth evolution for the level $\mu_t$). The ME
was found to be constant throughout the range of $\beta$, but
changing for the two values of $\delta$; for $\delta=0.08$ it was
$\textrm{ME}=[0.04~-0.18~0.00~0.05]'$ and for $\delta=0.8$ it was
$\textrm{ME}=[0.21~1.22~0.17~0.01]'$. From Figure \ref{fig1} it is
apparent that the aluminium and the zinc evolve together (their
difference appears to be in their levels) and likewise the copper
and the lead evolve together. This can be reflected in our model by
choosing $\beta_1=\beta_4$ and $\beta_2=\beta_3$ so that the
volatilities of say aluminium and zinc will be similar. Table
\ref{table1} shows the two performance measures (MSSE and LogL) for
a range of admissible values of $\beta_1$ and $\beta_2$, given that
$\textrm{tr}(\beta)/4>2/3$ so that the one-step forecast mean of
$\Sigma_t$ exists (see Section \ref{estimation}). For all $\beta\neq
I_4$ and for $\delta=0.08$, the log-likelihood function is maximized
for $\beta_1=\beta_2=\beta_3=\beta_4=0.2$
$(\textrm{LogL}=-11421.3)$, but this value can not be allowed,
because $(0.2+0.2+0.2+0.2)/4=0.2<2/3$. The highest value of LogL is
achieved for $\beta_1=0.65$ and $\beta_2=0.7$, but this produces
poor performance in the MSSE. Our choice is for $\beta_1=0.66$ and
$\beta_2=0.9$, returning reasonable values of the MSSE and a not
very low value for the LogL. When comparing the performance of the
models for the discount factors $\delta=0.08$ and $\delta=0.8$, we
note that the log-likelihood function corresponding to $\delta=0.8$
is smaller than that of $\delta=0.08$. Similarly the ME produced
with $\delta=0.8$ is too large and the MSSE does not achieve a
decent value for all four variables. Therefore we conclude that a
high discount factor $\delta$ should not be chosen.

Table \ref{table1} also reveals that a choice of
$\beta_1=\beta_2=\beta_3=\beta_4$ is inadequate, leading to poor
performance in the MSSE. It is clear that there are two main factors
driving the volatilities of the metals and these factors are
expressed here by the two discount factors $\beta_1$ and $\beta_2$.
The log-likelihood function for $\beta=[1~1~1~1]'$ (e.g. when
$\Sigma_t=\Sigma$), is $-\infty$ when the formula of Theorem
\ref{th2} is used (due to the infinity at the value of $m$), but
this likelihood is just -10344.66 when formula (\ref{loglsmall}) is
used. The fact that this log-likelihood appears to be the maximum
likelihood, is due to the fact that in the likelihood
(\ref{loglsmall}) the part of $\log p(\Sigma_t|\Sigma_{t-1})$ does
not appear. Likelihood (\ref{loglsmall}) should only be used when
there is strong evidence to suggest that the volatility is constant,
which clearly is not the case in this data set.

\begin{table}
\caption{Mean square one-step forecast standardized errors (MSSE)
and log-likelihood function (LogL) evaluated at the posterior mean
$S_t/(n-2)$, where
$\beta=[\beta_1~\beta_2~\beta_2~\beta_1]'$.}\label{table1}
\begin{center}
\begin{tabular}{|c||cccc||cccc||cc|}
\hline & MSSE & & & & MSSE & & &  & LogL & LogL \\ \hline &
$\delta=0.08$ & & & & $\delta=0.8$ & & & & $\delta=0.08$ &
$\delta=0.8$ \\ $[\beta_1~\beta_2]'$ &
Alum & Copp & Lead & Zinc &  Alum & Copp & Lead & Zinc &  &  \\
\hline

$0.65~0.70$ & 1.11 & 2.82 & 2.71 & 0.98 & 2.83 & 14.07 & 14.24 &
1.87 & -23980.40 & -29493.89  \\
$0.70~0.70$ & 0.86 & 2.83 & 2.70 & 0.73 & 2.33
& 14.11 & 14.20 & 1.46 & -25398.02 & -31200.99 \\
$0.75~0.70$ & 0.66 & 2.84 & 2.70 & 0.54 & 1.94 & 14.14 &
14.17 & 1.15 & -27088.82 & -33233.69 \\
$0.80~0.70$ & 0.51 & 2.84 & 2.69 & 0.40 & 1.62 & 14.17 & 14.14 &
0.92 & -29155.95 & -35711.21 \\
$0.85~0.70$ & 0.38 & 2.85 & 2.69 & 0.29 & 1.37 & 14.19 & 14.11 &
0.73 & -31766.03 & -38824.03 \\
$0.90~0.70$ & 0.29 & 2.86 &
2.69 & 0.20 & 1.16 & 14.21 & 14.10 & 0.58 & -35220.89 & -42910.13 \\
$0.95~0.70$ & 0.21 &
2.87 & 2.69 & 0.13 & 1.00 & 14.23 & 14.08 & 0.46 & -40202.95 & -48706.83 \\
$1.00~0.70$ & 0.16 & 2.87 & 2.69 & 0.08 & 0.86 & 14.25 & 14.07 &
0.37 & -49796.86 & -59413.16 \\
$0.60~0.80$ & 1.44 & 1.93 & 1.87 & 1.32 & 3.49 & 10.22 & 10.47 &
2.44 & -25666.85 & -31455.59 \\ $0.70~0.80$ & 0.86 & 1.95 & 1.86 &
0.74 & 2.33 & 10.29 & 10.41 & 1.48 & -29182.35 & -35726.44 \\
$0.80~0.80$ & 0.51 &
1.96 & 1.89 & 0.41 & 1.62 & 10.34 & 10.36 & 0.93 & -34531.82 & -42208.33 \\
$0.90~0.80$ & 0.29 & 1.97 & 1.85 & 0.20 & 1.16 & 10.37 & 10.32 &
0.59 & -43944.85 & -53515.26 \\ $1.00~0.80$ & 0.16 & 1.98 & 1.86 &
0.08 & 0.86 & 10.40 & 10.30 & 0.38 & -68721.06 & -82127.66 \\
$0.50~0.90$ & 2.42 &
1.35 & 1.26 & 2.41 & 5.50 & 7.66 & 8.06 & 4.27 & -26661.76 & -32427.18 \\
$0.60~0.90$ & 1.42 & 1.37 & 1.35 & 1.32 & 3.48 & 7.73 & 7.98 & 2.45
& -30145.39 & -36667.82 \\ $0.66~0.90$ & 1.05 & 1.37 & 1.34 & 0.94 &
2.72 &
7.76 & 7.95 & 1.81 & -32970.79 & -40111.85 \\

$0.70~0.90$ & 0.86 & 1.38 & 1.34 & 0.75 & 2.34 & 7.78 & 7.93 & 1.49
& -35324.88 & -42981.32 \\ $0.80~0.90$ & 0.52 &
1.39 & 1.34 & 0.42 & 1.63 & 7.82 & 7.89 & 0.94 & -44033.40 & -53579.88 \\
$0.90~0.90$ & 0.29 & 1.40 & 1.33 & 0.21 & 1.16 & 7.85 & 7.86 & 0.60
& -62082.88 & -75419.47 \\ $1.00~0.90$ & 0.16 & 1.41 & 1.33 & 0.08 &
0.85 & 7.87 & 7.84 & 0.38 & -126623.4 & -151414.0 \\ $0.40~1.00$ &
4.37 &
0.97 & 1.04 & 4.75 & 9.38 & 5.91 & 6.42 & 8.28 & -31289.33 & -36988.89 \\
$0.50~1.00$ & 2.39 & 0.98 & 1.02 & 2.42 & 5.45 & 5.98 & 6.34 & 4.29
& -35204.39 & -41657.98 \\ $0.60~1.00$ & 1.42 & 1.00 & 1.01 & 1.34 &
3.46 & 6.03 & 6.28 & 2.46 & -40956.66 & -48538.06 \\ $0.66~1.00$ &
1.05 &
1.00 & 1.00 & 0.95 & 2.72 & 6.06 & 6.25 & 1.82 & -45996.54 & -54572.35 \\

$0.70~1.00$ & 0.87 & 1.01 & 1.00 & 0.76 & 2.33 & 6.07 & 6.24 & 1.51
& -50472.62 & -59932.27 \\ $0.80~1.00$ & 0.52 & 1.01 & 0.99 & 0.43 &
1.63 & 6.10 & 6.20 & 0.95 & -69578.42 & -82795.49 \\ $0.90~1.00$ &
0.30 &
1.02 & 0.99 & 0.22 & 1.16 & 6.13 & 6.18 & 0.61 & -127928.4 & -152439.1 \\
$1.00~1.00$ & 0.16 & 1.02 & 0.98 & 0.08 & 0.85 & 6.14 & 6.16 & 0.38
& -10344.66 & -9947.38 \\
\hline
\end{tabular}
\end{center}
\end{table}

Table \ref{table2} shows the evaluation of VaR based on the
variance-covariance approach (see Section \ref{test}) for several
values of $\beta_1$ and $\beta_2$ and for $\delta=0.08$,
$\delta=0.8$ and $\delta=1$. Typically a $95\%$ confidence level is
used by investment banks and a $99\%$ confidence level is used by
the Basle Committee (Chong, 2004). $\delta=1$ refers to a
time-invariant level $\mu_t=\mu$, which is adopted in many MGARCH
type models (Bauwens {\it et al.}, 2006), while $\delta=0.8$
generates a time-dependent, but smooth level, and $\delta=0.08$
generates a highly adaptive time-dependent level $\mu_t$. Table
\ref{table2} shows that, for the same parameters of $\beta$, the VaR
using $\delta=1$ and $\delta=0.8$ are larger as compared with
$\delta=0.08$. Within $\delta=0.08$, the parameters
$\beta_1=0.7,\beta_2=0.8$, $\beta_1=0.66,\beta_2=0.9$,
$0.9,\beta_2=1$ and $\beta_1=\beta_2=1$ result to the best models.
From Tables \ref{table1} and \ref{table2} we suggest that the
overall best model is this with $\beta_1=0.66,\beta_2=0.9$,
producing not very low log-likelihood function, a decent MSSE and a
relatively low values of VaR.

\begin{table}
\caption{$95\%$ and $99\%$ VaR values of the portfolio of
$x_N=[x_{1,N}~x_{2,N}~x_{3,N}~x_{4,N}]'$, for several values of
$\beta=\textrm{diag}(\beta_1,\beta_2,\beta_2,\beta_1)$, for
$\delta=0.08$, $\delta=0.8$ and $\delta=1$.}\label{table2}
\begin{center}
\begin{tabular}{|c||cccccc|}
\hline & $95\%$ & $99\%$ & $95\%$ & $99\%$ & $95\%$ & $99\%$  \\
\hline
$[\beta_1~\beta_2]'$ & $\delta=0.08$ & & $\delta=0.8$ & & $\delta=1$ & \\
\hline $0.65~0.70$ & 293.574 & 415.207 & 756.831 & 1070.401 &
745.768 & 1054.754 \\ $0.60~0.80$ & 151.528 & 214.309 & 387.083 &
547.459 & 397.888 & 562.741 \\ $0.70~0.80$ &
92.249 & 130.470 & 234.716 & 331.964 & 243.412 & 344.263 \\
$0.50~0.90$ & 167.536 & 236.950 & 422.821 & 598.004 & 456.304 &
645.360 \\ $0.66~0.90$ & 83.463 & 118.043 & 209.949 & 296.935 &
227.667 & 321.994 \\ $0.50~1.00$ & 242.451 & 342.903 & 576.401 &
815.215 & 613.341 & 867.460 \\ $0.66~1.00$ & 144.758 & 204.734 &
344.389 & 487.076 & 367.356 & 519.559 \\ $0.90~1.00$ & 62.820 &
88.847 & 149.211 & 211.032 & 161.045 & 227.770 \\ $1.00~1.00$ &
$21.361$
& $30.273$ & $49.863$ & $70.667$ & $53.291$ & $75.523$ \\
\hline
\end{tabular}
\end{center}
\end{table}

Figure \ref{fig3} shows the one-step forecast of the volatilities
(diagonal elements of $\Sigma_t$) and Figure \ref{fig4} shows the
respective forecasts of the correlations of $\Sigma_t$. Figure
\ref{fig3} illustrates that the volatilities of aluminium and zinc
have a similar pattern and the volatilities of copper and lead have
a similar pattern. Copper and zinc appear to be the most volatile
and this is expected if we look at Figure \ref{fig1}, where the
trends of copper and zinc are less smooth than those of aluminium
and zinc. Figure \ref{fig4} confirms that the aluminium and the zinc
are more correlated than the aluminium and the lead. This figure
also indicates that the correlations are not very high in modulus.

\begin{figure}[t]
 \epsfig{file=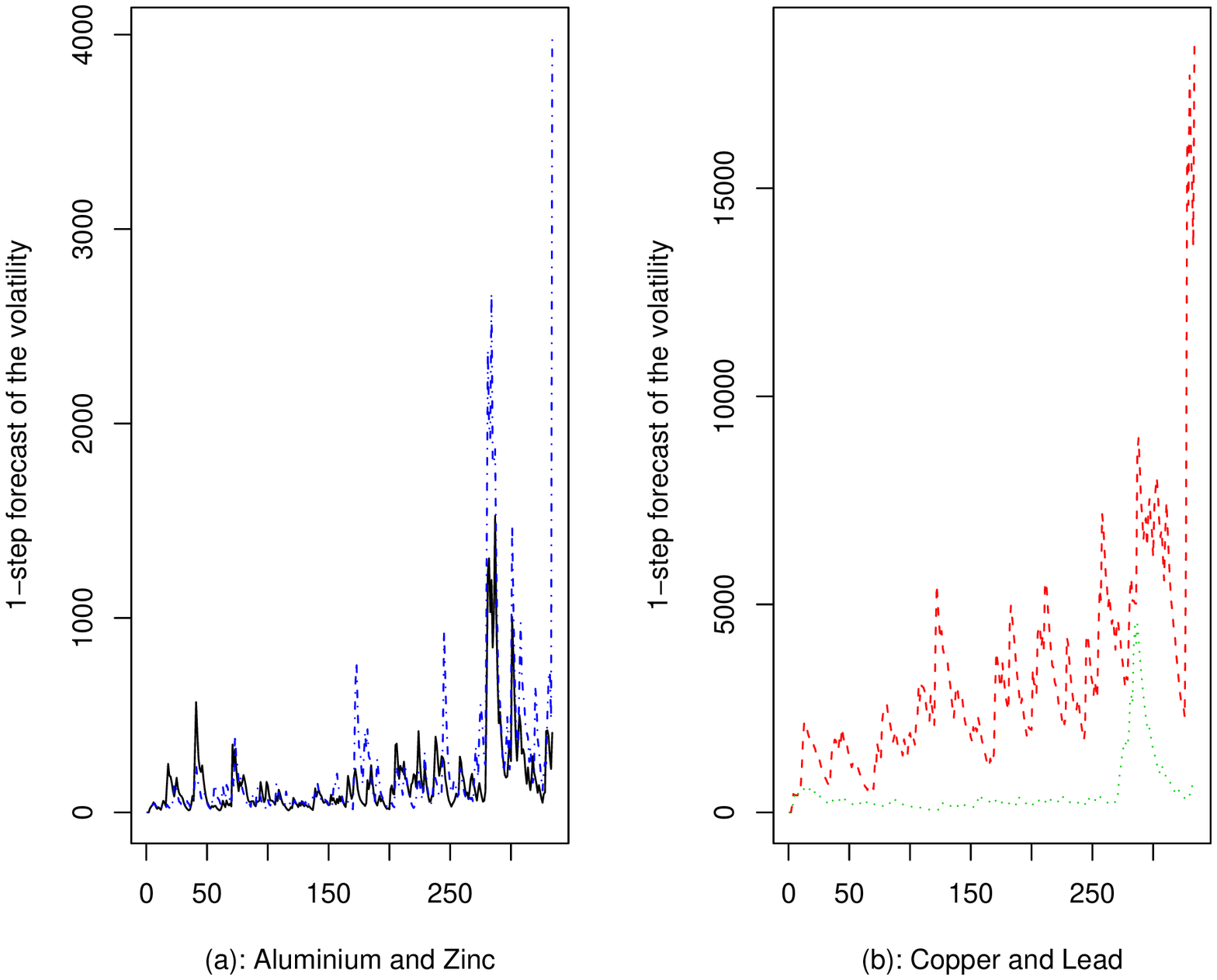, height=10cm, width=15cm}
 \caption{One-step forecasts of the volatility of aluminium (solid
 line of panel (a)), zinc (dashed-dotted line of panel (a)), copper
 (dashed line of panel (b)) and lead (dotted line of panel (b)).}\label{fig3}
\end{figure}

\begin{figure}
 \epsfig{file=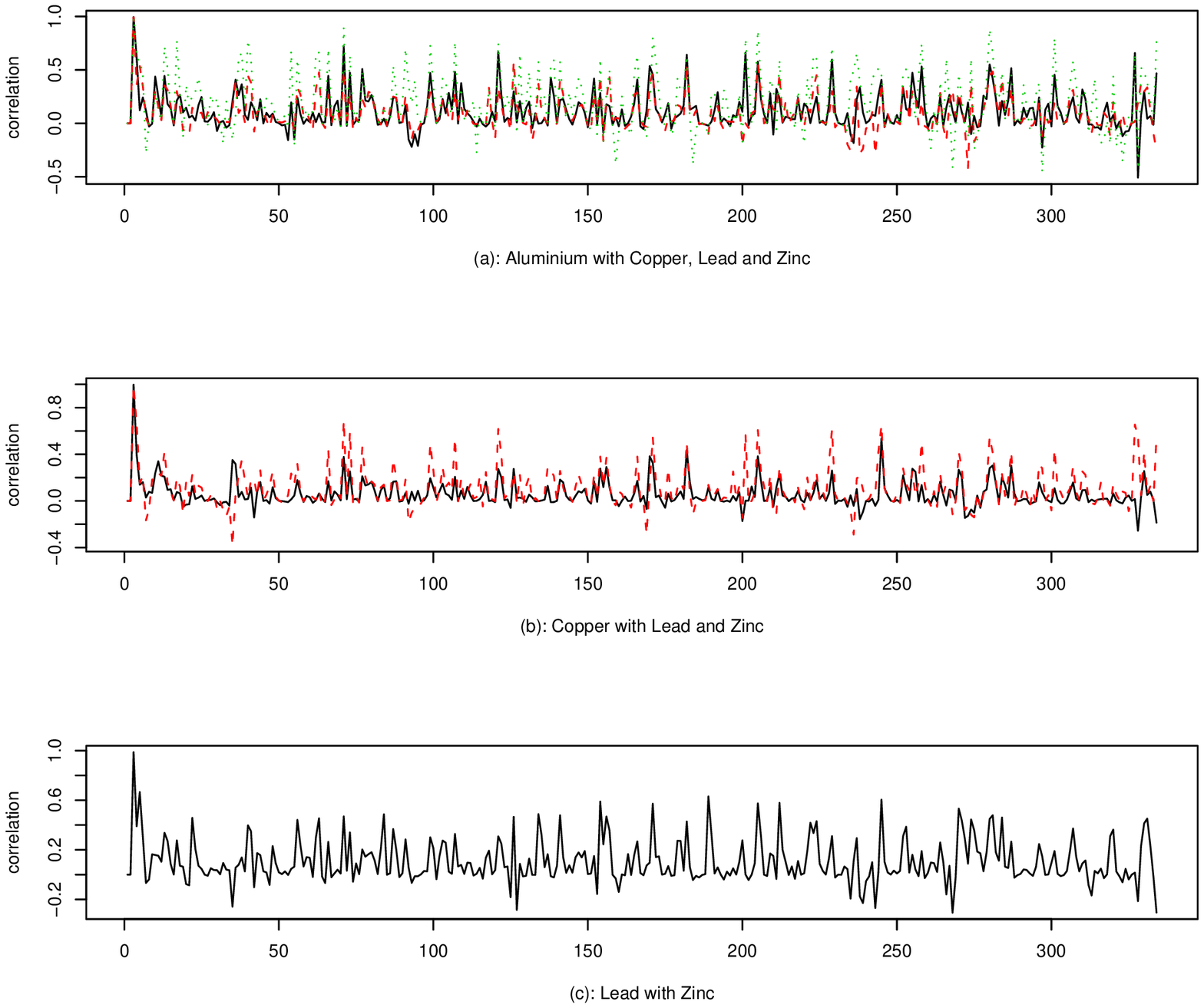, height=16cm, width=17cm}
 \caption{One-step forecast of the correlation of aluminium,
 copper, lead and zinc. Panel (a) shows the correlation of
 aluminium with copper (solid line), the correlation of aluminium
 with lead (dashed line) and the correlation of aluminium with zinc
 (dotted line); panel (b) shows the correlation of copper with lead
 (solid line) and the correlation of copper with zinc (dashed
 line); panel (c) shows the correlation of lead with zinc.}\label{fig4}
\end{figure}

From Figure \ref{fig1} we can clearly see that the aluminium and the
zinc are locally co-integrated of order 1, and the copper and lead
are also locally co-integrated of order 1. Here we use the term
\emph{locally co-integrated} of order $d$ to indicate that a linear
combination of each of the two variables are, after $d$ steps of
integration, locally stationary (in the sense that for a time
period, known also as \emph{regime} the time series is weakly
stationary). The aluminium and the copper are not co-integrated and
the same applies for the copper and zinc. This fact is apparent in
the volatilities (Figure \ref{fig3}) and in the model this is
reflected by the choice of two distinct elements in the discount
matrix $\beta$, i.e $\beta_1=\beta_4$ and $\beta_2=\beta_3$. There
are two distinct factors driving the volatilities of the four metals
and a factor volatility model could be applied to reduce the
complexity (Aguilar and West, 2000, Tsay, 2002, \S9.4).

\section{Discussion}\label{conclusions}

This paper develops a new Bayesian procedure for estimation and
forecasting of multivariate volatility. It is proposed that the
evolution of the unknown volatility covariance matrix is modelled
with a multiplicative stochastic model, based on Wishart and
singular multivariate beta distributions. The resulting algorithm is
capable to estimate the volatility element by element. This is
achieved by employing variance discounting using several discount
factors and thus allowing different volatilities to be discounted at
different rates.

In the last two decades many models have been developed for
multivariate volatility estimation (see Section \ref{intro}). Here
we provide a discussion of the advantages of our proposal compared
to the multivariate GARCH (MGARCH) models, reviewed in Bauwens {\it
et al.} (2006). Some of the MGARCH models result as generalizations
of univariate GARCH models (e.g. the VEC, the constant-correlation
GARCH, and the BEKK models, see also Section \ref{intro}). From
these models the constant-correlation GARCH model makes the strong
and usually unrealistic assumption of a constant correlation matrix,
whilst the VEC and even the BEKK have too many parameters to
estimate. The large number of parameters to be estimated, restrict
these models to applications of relatively low dimensions, usually
not exceeding $p=3$. The factor GARCH models (e.g the factor-BEKK
model) overcomes this difficulty, but in practice the specification
of the factors is not simple (Tsay, 2002, \S9.4). The
dynamic-correlation models (Bauwens {\it et al.}, 2006; Audrino and
Barone-Adesi, 2006) aim to combine the flexibility of the
constant-correlation GARCH, but to overcome the main drawback of
that model by introducing a specific time-dependent structure on the
correlation matrix. This can be done in several ways, but its main
drawback is that, if the dimension of the parameters is to be
manageable, the correlation matrix is driven by scalar parameters,
which means that all correlations have the same weight of change.
Perhaps, this is not a major issue for bivariate time series data,
but for higher dimensions it is unlikely to hold true. In our MV-DLM
model we overcome this problem by introducing the matrix of discount
factors $\beta$ and by discounting the volatilities and the
corresponding correlations at different rates.

The usual setting of a MGARCH model is that of
$y_t=\mu_t+\epsilon_t$, where $\mu_t$ is the level of the series
$y_t$ (usually the series will be the compound returns of some
assets or exchange rates), with $\epsilon_t$ being the innovation
series, following $\epsilon_t|\Sigma_t\sim\N_{p\times
1}(0,\Sigma_t)$ and $\Sigma_t$ represents the volatility matrix
subject to estimation. While it is recognized that the volatility
can affect the level, in some MGARCH studies the level is
time-invariant (Bauwens {\it et al.}, 2006), and in some other
studies the level is assumed to have a simple evolution, e.g. to
follow an autoregressive model of order one (Audrino and
Barone-Adesi, 2006). In the latter case estimation is usually
performed separately in the AR and GARCH components, which may not
be desirable for on-line forecasting. Our proposed model does in
fact allow for much more complicated structure in $\mu_t$, through
$\mu_t=\Theta_t'F_t$ and through the evolution equation of
$\Theta_t$, see equation (\ref{model2}). This can include structural
characteristics such as trend and seasonal components and applying
the principle of superposition of state space models (West and
Harrison, 1997, Chapter 6), one can build complex multivariate time
series models, for which estimation of the states is accompanied by
simultaneous estimation of the volatility. This is not achievable,
by neither ARIMA type models, nor by MGARCH models alone. In order
to build such models one has to consider a multivariate ARIMA model,
with errors following MGARCH models. In such models there are
inferential problems regarding to estimation and in the literature
simple models have been considered; for a univariate discussion on
this topic see Fiorentini and Maravall (1996) and Audrino and
Barone-Adesi (2006).

An important issue, which is discussed in Bauwens {\it et al.}
(2006), is that of marginalization. If $y_t$ follows a MGARCH model,
the question is whether $y_t^*=Ay_t$, follows the same type of
MGARCH model, where $A$ is a $q\times p$ matrix of constants. This
is an important problem, because if the model is closed under linear
transformations (or else if it is invariant under linear
transformations), then one can easily study the volatility of a
linear combination of some assets, for example to estimate the
volatility of a portfolio and hence the value at risk of a
portfolio. As pointed out in Bauwens {\it et al.} (2006) not all
GARCH models are invariant under the above linear transformation.
The MV-DLM is invariant, under some regulatory assumptions. Consider
model (\ref{model2}) and define $A$ a $q\times p$ matrix of rank
$q$. Then, we can write
$$
(y_t^*)'=F_t'\Theta_t^*+(\epsilon_t^*)',\quad
\Theta_t^*=G_t\Theta_{t-1}^*+\omega_t^*,\quad \epsilon_t^*\sim
\N_{q\times 1}(0, \Sigma_t^*), \quad \omega_t^*\sim\N_{d\times q}(0,
\Omega_t,\Sigma_t^*),
$$
where $\Theta_t^*=\Theta_tA'$, $\epsilon_t^*=A\epsilon_t$,
$\Sigma_t^*=A\Sigma_tA'$, $\omega_t^*=\omega_tA'$ and the remaining
components of the model is as in (\ref{model2}). Although, in the
above model we can not obtain an explicit formula for the precision
$\Psi_t=(A\Sigma_tA')^{-1}$, it is clear that using distribution
theory, we can establish that the linear transformation $y_t^*=Ay_t$
follows a MV-DLM with dimensions $q$ and $d$. For example, we can
readily see that from the posterior
$\Sigma_t|y^t\sim\IW_p(n+2p,S_t)$ we have
$\Sigma_t^*|(y^*)^t\sim\IW_q(n^*+2q,AS_tA')$, where $n^*=n+2(p-q)$.
It follows that all scalar $y_{it}$, with
$y_t=[y_{1t}~y_{2t}~\cdots~y_{pt}]'$, follow univariate DLMs of the
form of West and Harrison (1997, \S10.8) and the posterior
distributions of the diagonal elements
$\sigma_{11,t},\sigma_{22,t},\ldots,\sigma_{pp,t}$ of
$\Sigma_t=(\sigma_{ij,t})_{i,j=1,2,\ldots,p}$ are inverted gamma.

The Bayesian estimation approach of the MV-DLMs is preferred to the
usual maximum likelihood estimation approach of most of the MGARCH
models or to Bayesian estimation based on Monte Carlo simulation.
The proposed Bayesian approach is delivered in closed form and thus
it is available for on-line estimation. In the maximum likelihood
estimation approach, adopted in many MGARCH models, given a sample,
the aim is to estimate a set of parameters, sometimes a reasonably
large number of them and sometimes the maximization will be
computationally expensive and time consuming. This procedure may not
be suitable for sequential application, since the parameters and
their estimates seem to lose one of their dynamic power, which is to
adapt and to update as new information comes in. Our model is
adaptive to new information and it is computationally cheap, which
makes it suitable for volatility estimation of high dimensional
data.

\section*{Acknowledgements}
I am grateful to Giovanni Montana and to Tony O'Hagan, for several
helpful comments and suggestions on an earlier draft of the paper. I
wish to thank an anonymous referee, for providing detailed comments
that led to a considerably improved version of the paper.

\renewcommand{\theequation}{A-\arabic{equation}} 
\setcounter{equation}{0}  
\section*{Appendix}

In this appendix we detail the proofs of arguments in Sections
\ref{estimation} and \ref{test}. We begin with the prior
distribution (\ref{prior:v}).

\begin{pro}\label{pro1}
Consider model (\ref{model2}) with the priors (\ref{dwr3}) and the
evolution (\ref{vol:evol}). Let the posterior precision at time
$t-1$ be $\Phi_{t-1}|y^{t-1}\sim\W_p(n+p-1, S_{t-1}^{-1})$. Then,
with the prior degrees of freedom
$n=1/(1-p^{-1}\textrm{tr}(\beta))$, the prior distribution of
$\Phi_t$ is $\Phi_t|y^{t-1}\sim
\W_p(p^{-1}\textrm{tr}(\beta)n+p-1,\beta^{-1/2}S_{t-1}^{-1}\beta^{-1/2})$.
\end{pro}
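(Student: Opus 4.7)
}

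My plan is to reduce the evolution (\ref{vol:evol}) to a standard identity relating the Wishart and singular multivariate beta distributions (the content of Lemma \ref{lemma}), and then finish with a routine linear-transformation step. Write $\Phi_t = \beta^{-1/2}A_t\beta^{-1/2}$ where $A_t = C_{t-1}'B_tC_{t-1}$. Since $\beta^{-1/2}$ is deterministic given $y^{t-1}$, it suffices to identify the conditional law of $A_t$ given $y^{t-1}$ as a Wishart, and then push it through the map $X\mapsto \beta^{-1/2}X\beta^{-1/2}$.

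The key step is therefore to apply the singular multivariate beta/Wishart identity. Conditional on $y^{t-1}$, we have $\Phi_{t-1}\sim\W_p(n+p-1,S_{t-1}^{-1})$ with Cholesky factor $C_{t-1}$, and $B_t$ is independent of $\Phi_{t-1}$ with $B_t\mid y^{t-1}\sim\B_p\{(p^{-1}\textrm{tr}(\beta)n+p-1)/2,\,1/2\}$. The lemma (Uhlig, 1994) states, in the notation relevant here, that if $\Phi = C'C\sim\W_p(a+b,\Sigma)$ and, independently, $B\sim\B_p(a/2,b/2)$ with $b$ a positive integer (so that $B$ is defined through the singular construction), then $C'BC\sim\W_p(a,\Sigma)$. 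Matching parameters, we set $a = p^{-1}\textrm{tr}(\beta)n+p-1$ and $b=1$; the identity then yields $A_t\mid y^{t-1}\sim\W_p\bigl(p^{-1}\textrm{tr}(\beta)n+p-1,\,S_{t-1}^{-1}\bigr)$, provided the degrees of freedom are consistent, i.e.\ $a+b = n+p-1$. This last requirement collapses to $p^{-1}\textrm{tr}(\beta)n + p = n+p-1$, which is exactly solved by the prior choice $n = 1/(1-p^{-1}\textrm{tr}(\beta))$. This is the reason this specific value of $n$ is imposed in the model specification, and it is the only non-trivial bookkeeping in the argument.

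To finish, I invoke the standard affine property of the Wishart: if $X\sim\W_p(k,\Sigma)$ and $M$ is a nonsingular $p\times p$ matrix, then $MXM'\sim\W_p(k,M\Sigma M')$. Applying this with $M=\beta^{-1/2}$ (symmetric and nonsingular since $0<\beta_i<1$) to $A_t$ gives
\[
\Phi_t\mid y^{t-1}\sim\W_p\bigl(p^{-1}\textrm{tr}(\beta)n+p-1,\,\beta^{-1/2}S_{t-1}^{-1}\beta^{-1/2}\bigr),
\]
as claimed. Inverting this Wishart then delivers the inverted Wishart stated in (\ref{prior:v}).

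The main obstacle I anticipate is not the algebra but bookkeeping: making sure the parametric conventions of the singular multivariate beta distribution (as in Lemma \ref{lemma}) line up with those of the Wishart, and checking the integrality condition on $2q$ that governs when the singular beta is well-defined. Once those are in place the identity and the affine step reduce the proof to the single arithmetic identity that pins down $n$.
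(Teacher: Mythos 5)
Your proposal is correct and follows exactly the route the paper takes: the paper's proof is a one-line appeal to Theorem 1 of Uhlig (1994), which is precisely the Wishart/singular-beta identity you invoke, followed by the same affine transformation by $\beta^{-1/2}$ and the same degrees-of-freedom matching $p^{-1}\textrm{tr}(\beta)n+p=n+p-1$ that forces $n=1/(1-p^{-1}\textrm{tr}(\beta))$. You have simply made explicit the bookkeeping the paper leaves implicit.
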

The proof is a direct consequence of the model assumptions and
Theorem 1 of Uhlig (1994). From the above proposition, the prior
(\ref{prior:v}) is obtained from $\Sigma_t=\Phi_t^{-1}$.

\begin{proof}[Proof of Theorem \ref{th0b}]
The proof of the maximization of the log-likelihood function
requires matrix-differentiation, in particular, first and second
order differentiation in terms of $\Sigma$. Here we follow the
matrix-differentiation notation of Harville (1997) and the proof
mimics the early work on log-likelihood maximization of Harvey
(1986, 1989, \S8.3). An alternative proof can be obtained by
employing the log-likelihood maximization procedure, used for VAR
models, of L\"utkepohl (1993, pages 80-82).

With the posterior (\ref{dwr5a}), the forecast distribution of
$y_t|\Sigma$ is $y_t|\Sigma,y^{t-1}\sim \N_{p\times
1}(m_{t-1}'G_t'F_t,Q_t\Sigma)$, where $Q_t=F_t'R_tF_t+1$ and
$m_{t-1}$ and $R_t$ are defined in Section \ref{estimation}. The log
likelihood of $\Sigma$ is
\begin{eqnarray}
\ell (\Sigma ;y^N)&=&\log\prod_{t=1}^Np(y_t|\Sigma,y^{t-1}) =
-\frac{pN}{2}\log (2\pi)-\frac{p}{2}\sum_{t=1}^N\log Q_t
-\frac{N}{2}\log |\Sigma| \nonumber \\ && -
\frac{1}{2}\sum_{t=1}^N(y_t'-F_t'G_tm_{t-1})Q_t^{-1}\Sigma^{-1}(y_t-m_{t-1}'G_t'F_t).\label{app:loglb}
\end{eqnarray}
Taking the first derivative of $\ell(\Sigma;y^N)$ we get
\begin{eqnarray}
\frac{\partial \ell (\Sigma ;y^N)}{\partial \Sigma^{-1}} &=&
-\frac{N}{2} \frac{\partial\log |\Sigma^{-1}|^{-1}}{\partial
\Sigma^{-1}} - \frac{1}{2} \sum_{t=1}^N \frac{ \partial \{ (y_t' -
F_t'G_tm_{t-1})Q_t^{-1}\Sigma^{-1} (y_t-m_{t-1}G_t'F_t)\} } {
\partial \Sigma^{-1} } \nonumber \\ &=&
N\Sigma -
\frac{N}{2}\textrm{diag}\{\sigma_{11},\sigma_{22},\ldots,\sigma_{pp}\}
\nonumber \\ && - \frac{1}{2}\sum_{t=1}^N
\left(Q_t^{-1}e_te_t'-\textrm{diag}\left\{\frac{e_{1t}^2}{Q_t},\frac{e_{2t}^2}{Q_t},
\ldots,\frac{e_{pt}^2}{Q_t}\right\}\right)\label{app:var}
\end{eqnarray}
and this leads to
$$
\widehat{\Sigma}_N=\frac{1}{N}\sum_{t=1}^NQ_t^{-1}e_te_t'=\frac{1}{N}\sum_{t=1}^Nr_te_t',
$$
since
$$
r_t=y_t-m_t'F_t=y_t-m_{t-1}'G_t'F_t-e_tA_t'F_t=(1-A_t'F_t)e_t=Q_t^{-1}(Q_t-F_t'P_{t-1}F_t/\delta)e_t=Q_t^{-1}e_t.
$$
To prove that the second partial derivative of $\ell (\Sigma ;y^N)$
with respect to $\Sigma$ is a negative definite matrix, first we
show that the second partial derivative of $\ell (\Sigma;y^N)$ with
respect to $\Sigma^{-1}$ is a negative definite matrix. Denote with
$D_p$ the duplication matrix (i.e.
$\textrm{vec}(\Sigma)=D_p\textrm{vech}(\Sigma)$, where
$\textrm{vech}(\cdot)$ is the column stacking operator of a lower
portion of a symmetric matrix) and write $H_p$ to be any left
inverse of $D_p$ (i.e. $H_pD_p=I_p$). One choice for $H_p$ is
$H_p=(D_p'D_p)^{-1}D_p'$. For any vector $a$, let $\textrm{diag}(a)$
denote the diagonal matrix with diagonal elements the elements of
$a$. Write $\sigma=\textrm{vech}(\Sigma)$ and
$\sigma_*=\textrm{vech}(\Sigma^{-1})$. From equation (\ref{app:var})
we have
\begin{eqnarray}
\frac{\partial^2 {\ell (\Sigma;y^N)}}{\partial \sigma_*
\partial
\sigma_*'}&=&-NH_p(\Sigma\otimes\Sigma
)D_p-\frac{N}{2}\frac{\partial
\textrm{vech}(\textrm{diag}\{\sigma_{11},\sigma_{22},\ldots,\sigma_{pp}\})}{\partial
\sigma_*'}\nonumber\\
&=&-NH_p(\Sigma\otimes\Sigma )D_p-\frac{N}{2}\frac{\partial
\textrm{vech}(\textrm{diag}\{\sigma_{11},\sigma_{22},\ldots,\sigma_{pp}\})}{\partial
\sigma'} \frac{\partial
\sigma}{\partial \sigma_*'}\nonumber\\
&=&-NH_p(\Sigma\otimes\Sigma )D_p +
\frac{N}{2}\textrm{diag}\{\textrm{vech}(I_p)\}H_p(\Sigma\otimes\Sigma)D_p\nonumber\\
&=&
-\frac{N}{2}[2I_{p(p+1)/2}-\textrm{diag}\{\textrm{vech}(I_p)\}]H_p(\Sigma\otimes\Sigma
)D_p<0,\label{app:var2}
\end{eqnarray}
which is a negative definite matrix, since both
$2I_{p(p+1)/2}-\textrm{diag}\{\textrm{vech}(I_p)\}$ and
$H_p(\Sigma\otimes\Sigma )D_p$ are positive definite.

Now using the chain rule for matrix differentiation we have
$$
\frac{\partial^2 \ell (\Sigma;y^N) }{\partial \sigma\partial
\sigma'}= \frac{\partial^2 \ell (\Sigma;y^N)}{\partial
\sigma_*\sigma_*'}\left(\frac{\partial \sigma_*}{\partial
\sigma'}\right)^2+\frac{\partial \ell (\Sigma;y^N)}{\partial
\sigma_*'}\frac{\partial^2 \sigma_*}{\partial\sigma\partial \sigma'}
$$
and at $\Sigma=\widehat{\Sigma}_N$ we have that
$$
\frac{\partial^2 \ell (\Sigma;y^N) }{\partial \sigma\partial
\sigma'}\bigg |_{\Sigma=\widehat{\Sigma}_N}= \frac{\partial^2 \ell
(\Sigma;y^N)}{\partial \sigma_*\sigma_*'}\bigg
|_{\Sigma=\widehat{\Sigma}_N}\left(\frac{\partial \sigma_*}{\partial
\sigma'}\right)^2\bigg |_{\Sigma=\widehat{\Sigma}_N}<0,
$$
which from (\ref{app:var2}) is a negative definite matrix and so
$\widehat{\Sigma}_N$ maximizes the log-likelihood function $\ell
(\Sigma;y^N)$.
\end{proof}

Before we prove Theorem \ref{th2}, we give the following lemma.

\begin{lem}\label{lemma}
Suppose that the $p\times p$ matrix $B$ follows the singular
multivariate beta distribution $B\sim\B_p(m/2,n/2)$, with density
$$
p(B)= \pi^{(n^2-pn)/2}
\frac{\Gamma_p\{(m+n)/2\}}{\Gamma_n(n/2)\Gamma_p(m/2)}
|K|^{(n-p-1)/2} |B|^{(m-p-1)/2},
$$
where $n$ is a positive integer, $m>p-1$, $I_p-B=H_1KH_1'$, $K$ is
the diagonal matrix with diagonal elements the positive eigenvalues
of $I_p-B$, and $H_1$ is a matrix with orthogonal columns, i.e.
$H_1H_1'=I_p$. For any non-singular matrix $A$, the density of
$X=AB^{-1}A'$, is
$$
p(X)=\pi^{(n^2-pn)/2}
\frac{\Gamma_p\{(m+n)/2\}}{\Gamma_n(n/2)\Gamma_p(m/2)} |A|^{n+m-p-1}
|L|^ {-(p-n+1)/2} |X|^{-(m-p-1)/2},
$$
where $L$ is the diagonal matrix including the positive eigenvalues
of $I_p-A'X^{-1}A$.
\end{lem}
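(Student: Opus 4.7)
The plan is to derive the density of $X = A B^{-1} A'$ from that of $B$ via the change-of-variables formula, exploiting the easy inverse map $B = A' X^{-1} A$. Three ingredients enter: algebraic substitution, identification of the positive-eigenvalue matrices $K$ and $L$, and a Jacobian on the singular submanifold that supports $B$.

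First I would carry out the algebraic substitutions. Since $A$ is nonsingular and $B = A' X^{-1} A$, we have $|B| = |A|^{2} |X|^{-1}$, so $|B|^{(m-p-1)/2} = |A|^{m-p-1} |X|^{-(m-p-1)/2}$; this already produces the $|X|^{-(m-p-1)/2}$ factor of the target density and part of its $|A|$-power. Moreover $I_p - B = I_p - A' X^{-1} A$, so the diagonal matrix $K$ of positive eigenvalues of $I_p - B$ coincides with $L$ (up to ordering), giving $|K|^{(n-p-1)/2} = |L|^{(n-p-1)/2} = |L|^{-(p-n+1)/2}$, which is the target factor. The normalizing constant $\pi^{(n^2-pn)/2}\Gamma_p\{(m+n)/2\}/\{\Gamma_n(n/2)\Gamma_p(m/2)\}$ does not involve $B$ and transfers unchanged.

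Second, I would compute the Jacobian of the composite map $B \mapsto Y = B^{-1} \mapsto X = A Y A'$ with respect to the Hausdorff measure on the submanifold of $p \times p$ positive-definite matrices with $I_p - B$ of rank $n$. A convenient route is to parametrize $B$ in the Uhlig form $I_p - B = H_1 K H_1'$ with $H_1$ on the Stiefel manifold $V_{n,p}$ and $K$ a positive diagonal $n \times n$ matrix, and to track the induced parametrization $I_p - A' X^{-1} A = H_1^* L (H_1^*)'$ of $X$. Invoking the singular-matrix-variate calculus of Uhlig (1994, Theorem 1) and D\'{i}az-Garc\'{i}a and Guti\'{e}rrez (1997), the Jacobian factor should deliver exactly the missing $|A|^{n}$ (so that the total $|A|$-power reaches $|A|^{n+m-p-1}$) and no residual $|X|$-power. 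Combining this Jacobian with the algebraic substitutions and the unchanged normalizing constant yields the claimed density.

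The main obstacle is this last step. The naive full-rank Jacobians, $(dY) = |B|^{-(p+1)}(dB)$ for inversion and $(dX) = |A|^{p+1}(dY)$ for congruence, give the overall factor $|A|^{-(p+1)} |X|^{p+1}$ (after substituting $|B|^{-1} = |A|^{-2}|X|$), leading to the incorrect exponents $|A|^{m}$ and $|X|^{-(m+p+1)/2}$ instead of $|A|^{n+m-p-1}$ and $|X|^{-(m-p-1)/2}$. The discrepancy, amounting to $|A|^{n-p-1}|X|^{p+1}$, must be absorbed by working on the lower-dimensional stratum: a congruence transformation on a rank-$n$ submanifold scales with $|A|^{n}$ rather than $|A|^{p+1}$, and the inversion Jacobian is adjusted analogously so that the extra $|X|$-powers cancel. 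Verifying this cleanly is where the cited singular-matrix calculus is essential, and once it is in place the lemma follows by simple bookkeeping of the factors listed above.
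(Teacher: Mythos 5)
Your proposal follows the same route as the paper: a change of variables via $B=A'X^{-1}A$ on the rank-$n$ stratum, with the Jacobian supplied by the singular-matrix calculus of D\'{i}az-Garc\'{i}a and Guti\'{e}rrez (1997). The paper quotes that Jacobian explicitly as $(\,dB)=|K|^{(p-n+1)/2}|L|^{-(p-n+1)/2}|A|^{n}(\,dX)$, which --- since $I_p-B=I_p-A'X^{-1}A$ forces $K=L$ up to ordering, exactly as you observe --- collapses to the bare factor $|A|^{n}$ you posited; so the one step you flagged as unverified is precisely what the cited result delivers, and the remaining bookkeeping ($|B|=|A|^{2}|X|^{-1}$ giving $|A|^{m-p-1}|X|^{-(m-p-1)/2}$, and $|K|^{(n-p-1)/2}=|L|^{-(p-n+1)/2}$) matches the paper's computation line for line. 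The only caveat is that, as written, you infer the Jacobian from the target density rather than deriving or quoting it, which is circular at that single step; citing the D\'{i}az-Garc\'{i}a--Guti\'{e}rrez formula (as the paper does) closes the argument, and your diagnosis that the naive full-rank Jacobians $|B|^{-(p+1)}$ and $|A|^{p+1}$ would give the wrong exponents is a correct sanity check that the singular (rank-$n$) calculus is genuinely needed.
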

\begin{proof}
First note that $X$ is a non-singular matrix and
$|B|=|A|^2|X|^{-1}$. From D\'{i}az-Garc\'{i}a and Guti\'{e}rrez
(1997), the Jacobian of $B$ with respect to $X$ is
$$
(\,dB)=|K|^{(p-n+1)/2} |L|^{-(p-n+1)/2} |A|^n (\,dX),
$$
where $K$ is defined as in the theorem. Then from the singular
multivariate beta density of $B$ we obtain
\begin{eqnarray*}
p(X)&=&\pi^{(n^2-pn)/2}
\frac{\Gamma_p\{(m+n)/2\}}{\Gamma_n(n/2)\Gamma_p(m/2)} |A|^n
|K|^{(n-p-1)/2} |B|^{(m-p-1)/2} \\ && \times |K|^{(p-n+1)/2}
|L|^{-(p-n+1)/2},
\end{eqnarray*}
from which we immediately get the required density of $X$.
\end{proof}

\begin{proof}[Proof of Theorem \ref{th2}]
First we derive the likelihood function
$L(\Sigma_1,\Sigma_2,\ldots,\Sigma_N;y^N)$. We have
\begin{eqnarray*}
L(\Sigma_1,\Sigma_2,\ldots,\Sigma_N;y^N) &=&
p(y_1,y_2,\ldots,y_N|\Sigma_1,\Sigma_2,\ldots,\Sigma_N) \\ &=&
p(y_N|\Sigma_N,y^{N-1})
p(y_1,y_2,\ldots,y_{N-1}|\Sigma_1,\Sigma_2,\ldots,\Sigma_N).
\end{eqnarray*}
By Bayes' theorem the last part of the right hand side is
$$
p(y_1,y_2,\ldots,y_{N-1}|\Sigma_1,\Sigma_2,\ldots,\Sigma_N)\propto
p(\Sigma_N|\Sigma_{N-1},y^{N-1})
p(y_1,y_2,\ldots,y_{N-1}|\Sigma_1,\Sigma_2,\ldots,\Sigma_{N-1})
$$
and so applying the last equation repeatedly we have
\begin{equation}\label{logl1}
L(\Sigma_1,\Sigma_2,\ldots,\Sigma_N;y^N)=c^*\prod_{t=1}^N
p(y_t|\Sigma_t,y^{t-1})p(\Sigma_t|\Sigma_{t-1},y^{t-1}).
\end{equation}
The density $p(y_t|\Sigma_t,y^{t-1})$ is a multivariate normal
density, since from the Kalman filter
$y_t|\Sigma_t,y^{t-1}\sim\N_{p\times
1}(m_{t-1}'G_t'F_t,Q_t\Sigma_t)$. The density
$p(\Sigma_t|\Sigma_{t-1},y^{t-1})$ is the density $p(X)$ of Lemma
\ref{lemma} with $A=\beta^{1/2}C_{t-1}^{-1}$,
$\Sigma_t=C_t^{-1}(C_t^{-1})'$,
$m=p^{-1}\textrm{tr}(\beta)/(1-p^{-1}\textrm{tr}(\beta))+p-1$ and
$n=1$. The required formula of the log-likelihood function is
obtained from (\ref{logl1}) by taking the logarithm of
$L(\Sigma_1,\Sigma_2,\ldots,\Sigma_N;y^N)$, for $c^*=1$.
\end{proof}

\end{document}